\providecommand{\tabularnewline}{\\}
\theoremstyle{plain}
\newtheorem{lem}{\protect\lemmaname}
\theoremstyle{remark}
\newtheorem{rem}{\protect\remarkname}
\theoremstyle{plain}
\newtheorem{thm}{\protect\theoremname}
\def\changeBibColor#1{
  \in@{#1}{error-9, error-10}
  \ifin@\color {blue}\else\normalcolor\fi
}
\providecommand{\lemmaname}{Lemma}
\providecommand{\remarkname}{Remark}
\providecommand{\theoremname}{Theorem}
\begin{document}
\title{Random Aggregate Beamforming for\\ Over-the-Air Federated Learning in\\
Large-Scale Networks}

\author{Chunmei Xu,~\IEEEmembership{Student~Member,~IEEE},
        Shengheng~Liu,~\IEEEmembership{Member,~IEEE},
        Yongming~Huang,~\IEEEmembership{Senior Member,~IEEE},
        Bj\"orn~Ottersten,~\IEEEmembership{Fellow,~IEEE},
        and~Dusit~Niyato,~\IEEEmembership{Fellow,~IEEE}

\thanks{Manuscript received XXX XX, 2021; revised XXX XX, XXXX; accepted XXX XX, XXXX. Date of publication XXX XX, XXXX; date of current version XXX XX, XXXX.  This work was supported in part by the National Natural Science Foundation of China under Grant Nos. 62001103 and 61720106003, the National Key R\&D Program of China under Grant Nos. 2018YFB1800801 and 2020YFB1806608. (\emph{Corresponding authors: S.~Liu and Y.~Huang}.)}

\thanks{C.~Xu, S.~Liu and Y.~Huang are with the School of Information Science and Engineering, Southeast University, Nanjing 210096, China, and also with the Purple Mountain Laboratories, Nanjing 211111, China (e-mail: \{xuchunmei; s.liu; huangym\}@seu.edu.cn).}

\thanks{B.~Ottersten is with Interdisciplinary Centre for Security, Reliability and Trust (SnT), University of Luxembourg, L-1359 Luxembourg (e-mail: bjorn.ottersten@uni.lu).}

\thanks{D.~Niyato is with the School of Computer Science and Engineering, Nanyang Technological University, Singapore 639798 (e-mail: dniyato@ntu.edu.sg).}

}

\markboth{IEEE TRANSACTIONS ON WIRELESS COMMUNICATIONS,~Vol.~XX, No.~X, XXX~2022}%
{Xu \MakeLowercase{\textit{et al.}}: Random Aggregate Beamforming for Over-the-Air Federated Learning in Large-Scale Networks}

\maketitle
\begin{abstract}
At present, there is a trend to deploy ubiquitous artificial intelligence (AI) applications at the edge of the network. As a promising framework that enables secure edge intelligence, federated learning (FL) has received widespread attention, and over-the-air computing (AirComp) has been integrated to further improve the communication efficiency. In this paper, we consider a joint device selection and aggregate beamforming design with the objectives of minimizing the
aggregate error and maximizing the number of selected devices. This yields a combinatorial problem, which is difficult to solve especially in large-scale networks. To tackle the problems in a cost-effective manner, we propose a random aggregate beamforming-based scheme, which generates the aggregator beamforming vector via random sampling rather than optimization. The implementation of the proposed scheme does not require the channel estimation. We additionally use asymptotic analysis to study the obtained aggregate error and the number of the selected devices when the number of devices becomes large. Furthermore, a refined method that runs with multiple randomizations is also proposed for performance improvement. Extensive simulation results are presented to demonstrate the effectiveness of the proposed random aggregate beamforming-based scheme as well as the refined method.
\end{abstract}

\begin{IEEEkeywords}
Federated learning, over-the-air computation (AirComp), device selection, aggregate beamforming, large-scale distributed systems.
\end{IEEEkeywords}

\section{Introduction}

The ubiquitous mobile gadgets and Internet of Things (IoT) devices,
together with the recent revival and breakthrough of artificial intelligence
(AI), have inspired people to envision intelligence at the edge of the wireless networks \cite{edge, Platform}. Many AI tasks are computationally intensive, which are conventionally trained at a powerful server center with a large amount of data collected and stored. However, the centralized-training paradigm is generally connected to high latency, because of the big volume of data to be transmitted. In addition, collecting sensitive data such as geographic locations \cite{location_privacy} and personal health records \cite{eHealth} can lead to serious privacy violations. Fortunately, mobile edge computing (MEC) brings storage and computation resources closer to the user and allows circulation of data to take place locally \cite{MEC}. Moreover, the latest mobile devices
have already been equipped with high-performance graphical-processing units
(GPUs), which can be used for task training. Riding on these trends,
the deployment of AI algorithms at the network edge is feasible.
On the other hand, to address the privacy concerns in distributed learning, the framework of federated learning (FL) is surging in popularity, which collaboratively trains a globally shared model without sharing the sensitive raw data \cite{2016communication}. Despite effective prevention of data leakage, communication overhead is a principal bottleneck of the FL schemes, since frequent model transmission via wireless links are required \cite{edge,chen2019FLwireless,yang2019eeFL}.

It is widely foreseen that the future wireless network service such as IoT will leverage dense edge devices providing pervasive sensing and actuation ability. For a large-scale system with a massive number of devices, normally only a small fraction of devices are selected for local model updating. The reasons are twofold respectively from machine learning and communication perspectives. First, it is observed that the learning performance may degrade when excessive devices participate in the global model aggregation \cite{2016communication}, as too large sized batches lead to higher validation error \cite{Large_Minibatch_SGD}. Second, aggregation is constrained by the latency requirement and available wireless resources, which restricts the number of
devices involved. Hence, elaborate design of device selection and transmission mechanism is of paramount importance for edge intelligence in future large-scale wireless networks.

For device selection, the original protocol was designed to randomly
select a fraction of devices for local model updating of FL \cite{2016communication}.
However, this approach is inefficient in practical settings, since heterogeneous devices
have different reliability values, data sizes, computation/storage
capacities, and experience distinct wireless channels. For instance,
unreliable devices may perform malicious updates, which degrades the learning performance, as FL is susceptible to adversarial attacks \cite{kang2020reliable,kang2019incentive}.
Besides, the \emph{stragglers} with low storage
and computation capacities or appalling channel conditions generally require a longer
time to upload their local models. In a synchronized FL
system, the aggregator has to wait or ignore the stragglers, which
in return impairs the learning efficiency
\cite{bonawitz2019federated,li2020federated}. In this context, three simple scheduling policies were implemented in \cite{scheduling} and the corresponding FL convergence rates were analyzed by taking into account the wireless channel conditions \cite{scheduling}.
Further, Nishio et al. \cite{client_seletion_heter} consider maximizing the number of the selected devices under the training time budget constraints, which requires the collection of the available resource information of the devices. However, device selection schemes in the above studies are not assisted by proper transmission mechanism that can further improves the communication efficiency.

Improving the communication efficiency between the selected devices and the aggregator helps reduce the communication and training time budgets. Previous related works have considered reducing the time budgets by cutting down the size of the transmitted data via data quantization \cite{quantization} or model sparsification \cite{compression}. Stich et al. also proposed to aggregate the local models less frequently, where the global model is attained after the local models are updated for multiple steps rather than after each iteration \cite{local_sgd}. Additionally, the communication efficiency can be improved by lifting the transmission rate via advanced communication technologies \cite{Broadband}. Conventionally, in data aggregation, the receivers need to decode each individual transmitted data before further computations. This pattern is commonly referred to as the separated-communication-and-computation principle. Based on this conventional principle, the problem of joint device selection and bandwidth allocation is addressed \cite{fast_convergence}, where the convergence rate is maximized under the time budget constraint determined by the achievable computation latency. As mitigating the generated interferences consumes more wireless resources, this scheme is inefficient especially given the limited resources. As an alternative, the emerging
over-the-air computation (AirComp) principle \cite{zhu2020overtheair, AirComp} integrates computation and communication by leveraging the waveform
superposition property \cite{Gunduz,kaiyang,Broadband,xu2021learning}. The over-the-air analog aggregation can significantly improve the communication efficiency compared to the digital manner \cite{Broadband, Gunduz}

Nevertheless, AirComp introduces signal distortion due to the fading channels and noises, which can be measured by mean square error (MSE). Such error deviates the aggregate data from the desired one, which harms FL learning performance as well as the communication efficiency \cite{Broadband}. Specifically, the existence of error implies that the received model parameters at the aggregator are perturbed. A large perturbation can lead to divergence of the training loss and degrade the classification/regression performance \cite{effect_error}. Although error control in the upper layer can be applied, it inevitably lowers the communication efficiency. Thus, the aggregate error must be mitigated. To this end, power allocation for over-the-air FL is considered \cite{Power_Air1} by taking gradient statistics into account. Utilizing the multiple antennas equipped at the aggregator, the aggregate beamforming vector and the scaling factor were designed by incorporating dynamic learning rates \cite{xu2021learning}. Zhu et al. \cite{Broadband} considered joint device selection and transmit power design and proposed the method based on the derived communication-and-learning tradeoffs. Essentially, the optimization of transmit power, aggregate beamforming and other resource allocation targeting at MSE reduction is to align the signals from the selected devices. On the other hand, a moderate perturbation on the model parameters is beneficial. It avoids over-fitting and improves the robustness of a neural network against adversarial attacks, which can be regarded as a regulation technique \cite{effect_noise1, effect_noise2, effect_noise3, effect_noise4}. In this regard, selecting suitably more devices for local model updates and global model aggregation can improve the learning efficiency, since large-sized equivalent minibatch speeds up the training process while guaranteeing the achieved learning performance \cite{Large_Minibatch_SGD,fast_convergence}. Aiming at maximizing the number of the selected devices, Yang et al. \cite{kaiyang} considered the problem of joint device selection and aggregate beamforming vector optimization under the aggregate error constraint by using the difference of convex (DC).

Design of joint device selection and transmission scheme for massive number of devices deploying AI applications is a meaningful topic. As the underlying combinatorial problems in the large-scale systems are extremely difficult to solve, it is challenging and, to the best of our knowledge, remains unexplored yet. The size of the selected device subset grows exponentially with the number of devices, which makes exhaustive search prohibitive. Despite the simplicity of the random device selection
protocol, the resulting aggregate error can be unacceptable. As a compromise,
conventional optimization methods can achieve good performance, but
the required computational complexity may still be high. Besides, the acquisition of channel information in a large-scale system induces extremely high estimation overhead. Therefore, a new cost-effective method that can still achieve adequate performance is required. In this paper, we investigate the joint device selection and aggregate beamforming design in a large-scale system (e.g., IoT), where the FL framework and the AirComp technique are adopted. The heterogeneity of the devices focuses on the wireless channel states. Two relevant objectives, i.e., MSE minimization with a fixed number of devices, and the number of selected devices maximization under the MSE constrain, are considered for the systems with different requirements/goals, which make this paper comprehensive. The contributions of this paper are elaborated as follows:

\begin{itemize}

\item We propose a random aggregate beamforming-based scheme to solve the two combinatorial problems in a large-scale system. The core idea is to uniformly sample a vector from the complex unit sphere as the aggregate beamforming vector first, and select the devices afterwards. The implementation complexity of the proposed methods is low, which does not require the channel state information.

\item Through asymptotic analysis, we prove that the minimum aggregate error can be approached using the proposed methods when the number of devices becomes large. We also derive the number interval as well as the average value for the problem  of maximizing the number of the selected devices.

\item To improve the obtained performance of both the problems in practical scenarios where the number of devices is less than infinity, we propose the refined methods that samples multiple vectors from the complex unit sphere. Although the required number of the vectors that achieves acceptable performance cannot be explicitly given, it provides an insight to obtain solutions with performance improvement.

\end{itemize}

The remainder of the paper is organized as follows. Section II gives
the system model, where FL and AirComp technique are detailed. Section
III formulates the investigated problems including the minimization
of the aggregate error and the maximization of the number of the selected
devices. In Section IV, we propose the aggregate beamforming based
methods for both considered problems and discuss their theoretical
performance, and the refined methods are given in Section V for performance
improvement. Simulation results are presented in Section VI and the
paper is concluded in Section VII.

\section{System Model }

We consider a large-scale wireless network with edge intelligence,
which consists of $K$ single-antenna devices and an aggregator equipped
with $N \ll K$ antennas. Each device $k$ owns a local dataset $\mathcal{D}_{k}$,
which collectively constitutes the global dataset $\mathcal{D}=\cup_{k\in\mathcal{K}}\mathcal{D}_{k}$.
Typically, the objective function of a learning task is to find the
model $\mathbf{w}^{o}$ that minimizes the loss function $P\left(\mathbf{w};\mathcal{D}\right)$,
where $\mathbf{w}\in\mathbb{R}^{D}$ is the model weight parameters
with the dimension of $D$. Although the centralized gradient decent
method can be applied to update the model $\mathbf{w}$ in order
to obtain $\mathbf{w}^{o}$, it becomes impractical as local datasets $\mathcal{D}_{k}$ at each device cannot be accessed by others
due to the privacy and latency concerns. To handle these concerns,
the FL can be adopted.

The iterative learning process of FL includes: 1) A subset of devices are selected; 2) The selected devices update the model with local data, and upload the updated model back to the aggregator; 3) The aggregator receives these local models and aggregates them to obtain the global model.
As shown
in Fig. \ref{fig:System-model}, each device $k$ updates its model
based on the locally stored dataset $\mathcal{D}_{k}$, and transmits
the local model back to the aggregator. Then, the global model is
updated by averaging the local models from the participating devices,
which is referred to as the model aggregation phase. Following the updating
rules of FL, the local models $\mathbf{w}_{k},k\in\mathcal{K}$
and the global model $\mathbf{w}$ are respectively updated by
\begin{equation}
\mathbf{w}_{k}^{i+1}=\mathbf{w}^{i}-\mu\mathbf{g}_{k}\left(\mathbf{w}^{i}\right),
\end{equation}
and
\begin{equation}\label{eq:update}
\mathbf{w}^{i+1}=\frac{1}{K}\sum_{k=1}^{K}\mathbf{w}_{k}^{i+1},
\end{equation}
where superposition $i$ is the iteration index, $\mu$ is the learning
rate or step size. $\mathbf{g}_{k}(\mathbf{w}^{i})=\nabla_{\mathbf{w}^{i}}P_{k}\left(\mathbf{w}^{i}\right)$
is the gradients of the loss on the local dataset $\mathcal{D}_{k}$
with respect to (w.r.t.) $\mathbf{w}^{i}$, where $P_{k}\left(\mathbf{w}^{i}\right)=\frac{1}{\left|\mathcal{D}_{k}\right|}\sum_{n=1}^{\left|\mathcal{D}_{k}\right|}Q\left(\mathbf{w}^{i};\mathcal{D}_{k}^{n}\right)$
is the loss function with $Q\left(\mathbf{w}^{i};\mathcal{D}_{k}^{n}\right)$
the loss value on the $n$-th data sample $\mathcal{D}_{k}^{n}$.

\begin{figure}
\centering
\includegraphics[width=0.6\columnwidth]{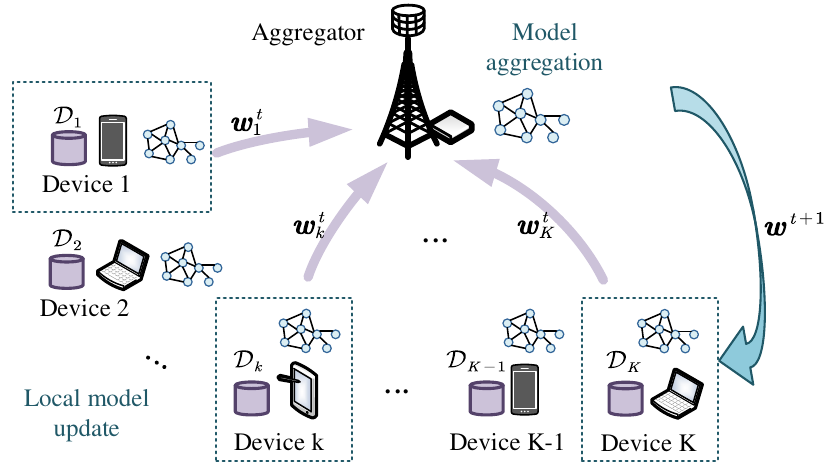}
\captionsetup{font={small}}\caption{System hierarchy model under investigation.} 
\label{fig:System-model}
\end{figure}

Apart from the observation from \cite{2016communication}, the limited
wireless resources also deter the participation of all devices in
updating the global model. Thus, only a subset of devices are selected
for local models updating and the global model aggregation. In the
considered model, AirComp technique is applied for the model aggregation
in order to improve the communication efficiency. The transmit symbol
vector of device $k$ at the $i$-th iteration is customized as $\mathbf{s}_{k}^{i}\triangleq\mathbf{w}_{k}^{i}$,
which is assumed to be normalized with unit variance, i.e., $\mathbb{E}\left[\mathbf{s}_{k}^{i}\left(\mathbf{s}_{k}^{i}\right)^{\mathrm{H}}\right]=\mathbf{I}$.
For notational convenience, the $d$-th element of $\mathbf{s}_{k}^{i}$,
$\mathbf{w}^{i}$ and $\mathbf{g}_{k}$, i.e., $\mathbf{s}_{k}^{i}[d]$,
$\mathbf{w}^{i}\left[d\right]$ and $\mathbf{g}_{k}[d]$,
are denoted as $s_{k}$, $w$ and $g_{k}$. By denoting the set of selected
device as $\mathcal{S}$, the desired signal based on (\ref{eq:update})
is written as
\begin{equation}
y_{\mathrm{des}}=\sum_{k\in\mathcal{S}}w_{k}=\sum_{k\in\mathcal{S}}s_{k}.
\end{equation}

In implementing AirComp, the symbols $s_{k}$ are modulated in an
analog manner and precoded by the transmit coefficients $b_{k}$. The amplitude of $b_k$ means the transmit power of device $k$ and its phase is used to help align the obtained  signals at the aggregator. Then,  these signals are superimposed over the air, and the received signals at the aggregator are combined by the aggregate beamforming
vector $\mathbf{m}$. Finally, the resulted signal is further
amplified by a factor $\eta$. Due to the fading and noisy wireless
channels, the actual received signal at the aggregator is given by
\begin{equation}\label{eq:aggregat_sig}
y=\sqrt{\eta}\left(\sum_{k\in\mathcal{S}}\mathbf{m}^{\mathrm{H}}\mathbf{h}_{k}b_{k}s_{k}+\mathbf{m}^{\mathrm{H}}\mathbf{n}\right),
\end{equation}
where $\mathbf{h}_{k}$ is the channel vector from device $k$
to the aggregator, which is assumed to be Complex Gaussian distributed
with unit power, i.e., $\mathbf{h}_{k}\sim\mathcal{CN}\left(\mathbf{0},\mathbf{I}\right)$.
This suggests that the channels are independently identically distributed (i.i.d.). Vector $\mathbf{n}$
is the additive white Gaussian noise such that $\mathcal{CN}\left(0,\sigma^{2}\right)$.
Thus, the resulted aggregate error via AirComp is expressed as
\begin{equation}\label{eq:error}
e\triangleq y_{\mathrm{des}}-y=\sum_{k\in\mathcal{S}}\left(1-\sqrt{\eta}\mathbf{m}^{\mathrm{H}}\mathbf{h}_{k}b_{k}\right)s_{k}-\sqrt{\eta}\mathbf{m}^{\mathrm{H}}\mathbf{n},
\end{equation} 
Letting $a_{k}^{\mathrm{}}=\left(1-\sqrt{\eta}\mathbf{m}^{\mathrm{H}}\mathbf{h}_{k}b_{k}\right)$,
the MSE is then written as
\begin{align}\label{eq:MSE-1}
\mathrm{MSE} & =\mathbb{E}\left(\left\Vert e\right\Vert ^{2}\right)\nonumber
  =\mathbb{E}\left[\left(\sum_{k\in\mathcal{S}}a_{k}s_{k}-\sqrt{\eta}\mathbf{m}^{\mathrm{H}}\mathbf{n}\right)^{\mathrm{H}}\left(\sum_{k\in\mathcal{S}}a_{k}s_{k}-\sqrt{\eta}\mathbf{m}^{\mathrm{H}}\mathbf{n}\right)\right]\nonumber \\
 & =\underset{\mbox{fading-related error}}{\underbrace{\sum_{k\in\mathcal{S}}a_{k}^{\mathrm{H}}a_{k}}}+\underset{\mbox{noise-related error}}{\underbrace{\eta\left\Vert \mathbf{m}\right\Vert ^{2}\sigma^{2}}},
\end{align}
which consists of the fading-related and the noise-related components.
To reduce the error, we can eliminate the error due to the fading while mitigating the error caused by the noise. According to (\ref{eq:MSE-1}), the elimination
of error caused by the fading should guarantee the following condition \cite{kaiyang,xu2021learning}:
\begin{equation}
a_{k}=1-\sqrt{\eta}\mathbf{m}^{\mathrm{H}}\mathbf{h}_{k}b_{k}=0,k\in\mathcal{S}.
\end{equation}
Consequently, the resulted MSE are respectively expressed as
\begin{equation}
\mathrm{MSE}=\mathbb{E}\left(\left\Vert e\right\Vert ^{2}\right)=\eta\left\Vert \mathbf{m}\right\Vert ^{2}\sigma^{2}.\label{eq:MSE}
\end{equation}

\section{Problem Formulation}

When applying the AirComp technique, the signal distortion due to the fading and the noise results in the deviation of the aggregate data from the true one, which is critical for the performance of FL tasks. In the scenarios where the signal to noise ratio (SNR) is low, the resulted aggregate error under the fixed number of the selected devices can be extremely lager. Such large error degrades the learning performance \cite{effect_error,Broadband}, and hence it should be reduced. On the other hand, a considerable SNR would lead to a moderate perturbation which helps improve the robustness of the neural networks \cite{effect_noise1,effect_noise2,effect_noise3}. Thereby,  more devices need to be selected for local model update in order to enhance the learning efficiency \cite{Large_Minibatch_SGD, Broadband}. In this paper, we investigate joint device selection and aggregate beamforming design with two objectives of MSE minimization and the number of selected  devices maximization. The problem formulations are presented in this section.
As we have assumed before that the devices are homogeneous except
for the fading channels, the device selection policy considered here
is dependent on the communication factors. The consideration of learning
factors such as data size, computation and storage capacities, will be left for our future works.

\subsection{MSE Minimization}

To maintain the training and inference performance of the AI tasks,
the large aggregate error measured by MSE should be minimized. Note that
the number of the devices selected for the model is fixed, i.e., $\left|\mathcal{S}\right|=S$,
where $\left|\mathcal{S}\right|$ is the cardinality of subset $\mathcal{S}$. The
optimization variables include the scaling factor $\eta$, transmitting
coefficient \textbf{$b_{k}$}, device subset $\mathcal{S}$, and aggregate
beamforming vector $\mathbf{m}$. Since the above variables are
independent from noise vector $\mathbf{n}$, minimizing the MSE in (\ref{eq:MSE})
shares the identical solution with the objective of $\eta\left\Vert \mathbf{m}\right\Vert ^{2}$.
Mathematically, the MSE minimization problem is formulated as
\begin{subequations}\label{eq:SIMO}
\begin{align}
\min_{\mathbf{m},\eta,b_{k},\mathcal{S\subseteq K}}\quad & \left\Vert \mathbf{m}\right\Vert ^{2}\eta\label{eq:pro1_1-1}\\
\mathrm{s.t.\quad} & \sqrt{\eta}\mathbf{m}^{\mathrm{H}}\mathbf{h}_{k}b_{k}=1,k\in\mathcal{S}\label{eq:pro1-2-1}\\
 & \left|\mathcal{S}\right|=S\label{eq:pro1_3-1}\\
 & \left\Vert b_{k}\right\Vert ^{2}\leq P,k\in\mathcal{S}\label{eq:pro1_4-1}
\end{align}
\end{subequations}
where constraints (\ref{eq:pro1-2-1}), (\ref{eq:pro1_3-1}) and (\ref{eq:pro1_4-1})
are respectively the condition of eliminating the fading-related error,
the fixed number of the selected devices, and the transmit power constraint
with $P$ the maximum power.

The optimal transmit coefficient $b_{k}$ following \cite{8364613}
can be designed as
\begin{equation}
b_{k}=\frac{\mathbf{h}_{k}^{\mathrm{H}}\mathbf{m}}{\sqrt{\eta}\left\Vert \mathbf{m}^{\mathrm{H}}\mathbf{h}_{k}\right\Vert ^{2}}.\label{eq:transmit}
\end{equation}
Substituting back to constraint (\ref{eq:pro1_4-1}) yields $\eta\geq\frac{1}{P\left\Vert \mathbf{m}^{\mathrm{H}}\mathbf{h}_{k}\right\Vert ^{2}},k\in\mathcal{S}$
for each device $k$. Thereby, $\eta$ is derived as
\begin{equation}
\eta=\underset{k}{\max}\frac{1}{P\left\Vert \mathbf{m}^{\mathrm{H}}\mathbf{h}_{k}\right\Vert ^{2}},k\in\mathcal{S},\label{eq:yeta}
\end{equation}
which transforms problem (\ref{eq:SIMO}) into
\begin{align}
\min_{\mathbf{m},\mathcal{S\subseteq K}}\quad & \underset{k\in\mathcal{S}}{\max}\frac{\left\Vert \mathbf{m}\right\Vert ^{2}}{P\left\Vert \mathbf{m}^{\mathrm{H}}\mathbf{h}_{k}\right\Vert ^{2}}\quad\mathrm{s.t.}\;(\mathrm{\mathrm{\ref{eq:pro1_3-1}}}).\label{eq:SIMO1}
\end{align}
Since both numerator and denominator of the above objective contain
the aggregate beamforming vector $\mathbf{m}$, we can add an
extra constraint such that $\left\Vert \mathbf{m}\right\Vert =1$,
which would not change the objective value. Moreover, since  $\max \min 1/=\min \max$, the above problem is equivalent to
\begin{subequations}\label{eq:SIMO1_1}
\begin{align}
\max_{\mathbf{m},\mathcal{S\subseteq K}}\quad & \underset{k\in\mathcal{S}}{\min}P\left\Vert \mathbf{m}^{\mathrm{H}}\mathbf{h}_{k}\right\Vert ^{2}\label{eq:SIMO2-1-4}\\
\mathrm{s.t.}\mathrm{\quad} & \left\Vert \mathbf{m}\right\Vert =1\label{eq:SIMO2-2}\\
 & (\mathrm{\mathrm{\ref{eq:pro1_3-1}}}),\nonumber
\end{align}
\end{subequations}
which is a mixed combinatorial optimization problem.

\subsection{Maximizing The Number of Selected Devices }

A moderate perturbation caused by the error can be interpreted as a regulation technique, which enhances the robustness of a neural network. More devices involving in local model updating
and global model aggregation
can improve the learning efficiency \cite{2016communication}\cite{Large_Minibatch_SGD}.
Here, we consider the maximum tolerance error denoted as $\overline{\mathrm{MSE}}$.
The objective is to maximize the number of the selected devices under
the MSE constraint. Same as the MSE minimization problem, the transmitting
coefficient \textbf{$b_{k}$} and the scaling factor $\eta$ are designed
respectively as (\ref{eq:transmit}) and (\ref{eq:yeta}). Defining
$\mathrm{MSE}_{k}=\frac{\left\Vert \mathbf{m}\right\Vert ^{2}\sigma^{2}}{P\left\Vert \mathbf{m}^{\mathrm{H}}\mathbf{h}_{k}\right\Vert ^{2}}$
where $\left\Vert \mathbf{m}\right\Vert =1$, the resulted MSE
in (\ref{eq:MSE}) can be expressed as $\mathrm{MSE}=\underset{k\in\mathcal{S}}{\max}  \mathrm{MSE}_{k}$. Mathematically, the problem is formulated as
\begin{subequations}\label{eq:SIMO3}
\begin{align}
\max_{\mathbf{m},\mathcal{S\subseteq K}}\quad & \left|\mathcal{S}\right|\label{eq:SIMO3_1}\\
\mathrm{s.t.}\mathrm{\quad} & \mathrm{MSE}=\underset{k\in\mathcal{S}}{\max}\mathrm{MSE}_{k}\leq\overline{\mathrm{MSE}}\label{eq:SIMO3_2}\\
 & (\ref{eq:SIMO2-2}),\nonumber
\end{align}
\end{subequations}which is also a mixed combinatorial optimization problem. The constraint
(\ref{eq:SIMO3_2}) therein indicates that the aggregate error should
be less than or equal to $\overline{\mathrm{MSE}}$. 
communication performance.

Both considered problems are mixed combinatorial, which are difficult
to solve. To obtain the optimal selected device subset, it is straight-forward
to search the device subset in a brute-force manner. However, the
size of the search space, i.e., $\tbinom{K}{S}$, becomes prohibitively
large when $K$ is large, which makes brute-force searching intractable.
The device subset can be determined by utilizing the sparse property
as in \cite{kaiyang}, but the computational burden can be extremely
large in a network with a massive number of edge devices. Moreover,  optimizing the aggregate beamforming $\mathbf{m}$ for the
selected devices
is still challenging due to the nonconvexity. 
Therefore, new methods with low-complexity are needed to address
these issues  in a large-scale wireless network.

\section{Random Aggregate Beamforming-Based Scheme}

In this section, we propose a random aggregate beamforming-based scheme to solve the previous formulated problems, which is cost-effective. We also give the theoretical analysis in terms of the obtained MSE and
the number of the selected devices. With regard to $\mathbf{m}$, we
arrive at the following lemma.
\begin{lem}
\label{thm:infinity-1}For any given feasible $\mathbf{m}$ and
arbitrary $\theta\in\mathbb{R}$, the objectives of the considered problems
(\ref{eq:SIMO1_1}) and (\ref{eq:SIMO3}) are identical under $\mathbf{m}$
\textup{and $\mathbf{m}e^{j\theta}$.}
\end{lem}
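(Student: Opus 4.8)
The plan is to exploit the fact that, in both objective formulations, the beamforming vector $\mathbf{m}$ enters only through the rotation-invariant quantities $\|\mathbf{m}\|$ and $\|\mathbf{m}^{\mathrm H}\mathbf{h}_k\|$. First I would record the elementary identity
$(\mathbf{m}e^{j\theta})^{\mathrm H}\mathbf{h}_k = e^{-j\theta}\,\mathbf{m}^{\mathrm H}\mathbf{h}_k$,
which together with $|e^{-j\theta}|=1$ gives $\|(\mathbf{m}e^{j\theta})^{\mathrm H}\mathbf{h}_k\|^2 = \|\mathbf{m}^{\mathrm H}\mathbf{h}_k\|^2$ for every $k$, and likewise $\|\mathbf{m}e^{j\theta}\| = \|\mathbf{m}\|$. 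In particular $\mathbf{m}e^{j\theta}$ satisfies the unit-norm constraint (\ref{eq:SIMO2-2}) precisely when $\mathbf{m}$ does, so rotating a feasible point by $e^{j\theta}$ keeps it feasible; this is why the hypothesis only needs $\mathbf{m}$ to be feasible and $\theta$ arbitrary.

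Next I would substitute these identities into the two problems. For (\ref{eq:SIMO1_1}), for any fixed subset $\mathcal{S}$ the objective $\min_{k\in\mathcal{S}} P\|\mathbf{m}^{\mathrm H}\mathbf{h}_k\|^2$ is literally unchanged when $\mathbf{m}$ is replaced by $\mathbf{m}e^{j\theta}$; since the admissible subsets are constrained only by $|\mathcal{S}|=S$ (constraint (\ref{eq:pro1_3-1})), which does not involve $\mathbf{m}$ at all, maximizing over $(\mathbf{m},\mathcal{S})$ yields the same optimal value under $\mathbf{m}$ and $\mathbf{m}e^{j\theta}$. For (\ref{eq:SIMO3}), the per-device error $\mathrm{MSE}_k = \|\mathbf{m}\|^2\sigma^2/(P\|\mathbf{m}^{\mathrm H}\mathbf{h}_k\|^2)$ is invariant by the same computation, so the constraint $\max_{k\in\mathcal{S}}\mathrm{MSE}_k \le \overline{\mathrm{MSE}}$ picks out exactly the same family of admissible subsets $\mathcal{S}$ under $\mathbf{m}$ as under $\mathbf{m}e^{j\theta}$; hence the maximal cardinality $|\mathcal{S}|$ is identical.

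There is no genuine obstacle here: the statement is essentially a one-line consequence of $|e^{j\theta}|=1$. The only point worth flagging explicitly is that the claim refers to the reduced problems (\ref{eq:SIMO1_1}) and (\ref{eq:SIMO3}) obtained after eliminating $b_k$ and $\eta$ via (\ref{eq:transmit}) and (\ref{eq:yeta}); I would add a one-sentence remark that, at the level of the original formulation (\ref{eq:SIMO}), a phase rotation of $\mathbf{m}$ is simply absorbed by the compensating phase of the optimal transmit coefficients $b_k$, so indeed nothing changes. This phase invariance is precisely what later legitimizes drawing $\mathbf{m}$ uniformly from the complex unit sphere: only its equivalence class modulo a global phase matters.
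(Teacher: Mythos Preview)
Your proposal is correct and follows essentially the same approach as the paper: the paper's proof also reduces to the two identities $\left\Vert (\mathbf{m}e^{j\theta})^{\mathrm{H}}\mathbf{h}_{k}\right\Vert^{2}=\left\Vert \mathbf{m}^{\mathrm{H}}\mathbf{h}_{k}\right\Vert^{2}$ and $\left\Vert \mathbf{m}e^{j\theta}\right\Vert=\left\Vert \mathbf{m}\right\Vert$, noting that these preserve both the objective values and the constraints. Your write-up is simply more explicit in spelling out the substitution into each problem and adds the (accurate but inessential) remark about absorbing the phase into $b_k$ at the level of the original formulation.
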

\begin{proof}
For any $\theta\in\mathbb{R}$, we have $\left\Vert \left(\mathbf{m}e^{j\theta}\right)^{\mathrm{H}}\mathbf{h}_{k}\right\Vert ^{2}=\left\Vert \left(\mathbf{m}\right)^{\mathrm{H}}\mathbf{h}_{k}\right\Vert ^{2}$
and $\left\Vert \mathbf{m}e^{j\theta}\right\Vert =\left\Vert \mathbf{m}\right\Vert $,
which indicate that solutions $\mathbf{m}e^{j\theta}$ and $\mathbf{m}$
have the same objective values and guarantee the constraints.
\end{proof}
\begin{rem}
\label{thm:infinity} There are infinite solutions of the aggregate
beamfoming vector $\mathbf{m}$ with the same objectives. Typically,
if $\mathbf{m}^{*}$ is the optimal solution to problems (\ref{eq:SIMO1_1})
and (\ref{eq:SIMO3}), $\mathbf{m}^{*}e^{j\theta},\forall\theta\in\mathbb{R}$
is also optimal, which indicates that there are infinite optimal solutions
of $\mathbf{m}$.
\end{rem}

\subsection{Random Aggregate Beamforming-Based Methods}

We observe that if the aggregate beamforming vector $\mathbf{m}$
is determined at the very first, the optimal devices subset for the
investigated problems can be easily obtained by arranging and sorting
the equivalent channel power $\left\Vert \mathbf{m}^{\mathrm{H}}\mathbf{h}_{k}\right\Vert ^{2}$.
Obtaining optimal $\mathcal{S}$ in this manner only requires the
computational complexity of $\mathcal{O}\left(K\right)$, which is
computationally efficient. Inspired by this, we propose the random
aggregate beamforming-based scheme to the investigated problems.
The core idea is to determine the aggregate beamforming vector $\mathbf{m}$
first, and the selected device subset $\mathcal{S}$ afterwards.
Specifically, sample a vector from the set, i.e.,
\[
\mathcal{M}=\left\{ \mathbf{m}|\text{\ensuremath{\left\Vert {\mathbf{m}}\right\Vert }=1},{\mathbf{m}}\in\mathbb{C}^{N}\right\} ,
\]
which is a complex unit sphere with $N$ dimensions. To uniformly
generate $\mathbf{m}$, we can normalize a random vector $\mathbf{m}_{1}$
from $\mathcal{CN}\left(\mathbf{0},\mathbf{I}\right)$, i.e.,
$\mathbf{m}=\mathbf{m}_{1}/\left\Vert \mathbf{m}_{1}\right\Vert ,\mathbf{m_{1}}\sim\mathcal{CN}\left(\mathbf{0},\mathbf{I}\right)$,
whose computational overhead is negligible. For problem (\ref{eq:SIMO1_1}),
the devices subset is optimized by arranging the equivalent channel
power $\left\Vert \mathbf{m}^{\mathrm{H}}\mathbf{h}_{k}\right\Vert ^{2}$
and selecting the devices with the $S$ largest values. Similarly, for
problem (\ref{eq:SIMO3}), the devices are selected if satisfying
the condition of $\mathrm{MSE}_{k}\leq\overline{\mathrm{MSE}}$. It is worth noting that the proposed methods have significantly low complexity,
since generating vector $\mathbf{m}$ and obtaining the subset
$\mathcal{S}$ therein are easy.
\begin{figure}
\centering
\subfigure[] {\includegraphics[width=0.48\columnwidth]{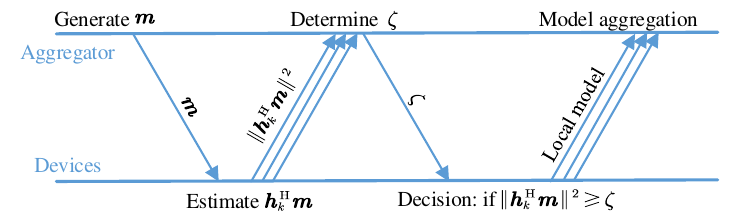}}
\subfigure[] {\includegraphics[width=0.48\columnwidth]{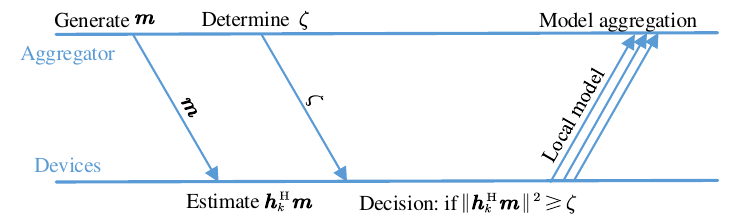}}	
\captionsetup{font={small}} 
\caption{Implementation of the proposed scheme for (a) problem (\ref{eq:SIMO1_1}) and (b) problem (\ref{eq:SIMO3}).}
\label{fig:implement}
\end{figure}

The implementation of the proposed algorithms to minimize MSE and maximize the number of selected devices is shown in Fig. \ref{fig:implement}. According to this figure, the implementation of our random-based algorithms does not require the channel estimation between the aggregator and the edge devices, which can greatly reduce the implementation complexity. For MSE minimization problem (\ref{eq:SIMO1_1}), the aggregator first generates a random aggregate beamforming vector $\mathbf{m}$, which is broadcast to all devices. Then all devices estimate the obtained value $\mathbf{h}^{\mathrm{H}}_k\mathbf{m}$ which is a weighted channel value  rather than $\mathbf{h}^{\mathrm{H}}_k$, and feedback $\Vert\mathbf{h}^{\mathrm{H}}_k\mathbf{m}\Vert^2$ to the aggregator.  After receiving the  feedbacks from all devices, the aggregator determines and broadcasts a threshold $\zeta$, which is  obtained by finding the $S$ largest weighted channel gain values. Finally, the devices are selected for local model updates and global model aggregation if their weighted channel gain values are no less than $\zeta$. For problem (\ref{eq:SIMO3}), the implementation procedure is similar except for the determination of  $\zeta$. The threshold therein is determined directly at the aggregator without the information of weighted channel gains, which is obtained such that $\zeta=\frac{\sigma^2}{P\overline{\mathrm{MSE}}}$.
The implementation of our proposed algorithms does not require any channel information.  This is the major advantage compared with those methods requiring the channel information, since the overhead for channel estimation in a lager-scale network is heavy.

In the following, we focus on the theoretical
analysis in terms of the objectives of MSE and the number of the selected
devices in a large-scale system with massive edge devices.

\subsection{The Distribution of $\left\Vert \mathbf{m}^{\mathrm{H}}\mathbf{h}_{k}\right\Vert $}

We first analyze the property of inner product $\mathbf{m}^{\mathrm{H}}\mathbf{h}_{k}=\sum_{i=1}^{N}\mathbf{m}[i]\mathbf{h}_{k}[i]$
where $\mathbf{m}[i]$ and $\mathbf{h}_{k}[i]$ are the $i$-th
element of vector $\mathbf{m}$ and $\mathbf{h}_{k}$. Since
\textbf{$\mathbf{h}_{k}[i]$ }is Gaussian distributed, i.e., \textbf{$\mathbf{h}_{k}[i]\sim\mathcal{CN}(0,1)$},
we have $\mathbf{m}[i]\mathbf{h}_{k}[i]\sim\mathcal{CN}(0,\left\Vert \mathbf{m}[i]\right\Vert ^{2})$.
Together with the additional constraint of $\left\Vert \mathbf{m}\right\Vert =1$,
the distribution of $\mathbf{m}^{\mathrm{H}}\mathbf{h}_{k}$
is readily obtained as
\begin{equation}
\mathbf{m}^{\mathrm{H}}\mathbf{h}_{k}\sim\mathcal{CN}(0,\sum_{i=1}^{N}\left\Vert \mathbf{m}[i]\right\Vert ^{2})=\mathcal{CN}(0,1),\label{eq:inner_dis}
\end{equation}
which is Gaussian distributed with mean $0$ and variance $1$. Since
the channels $\mathbf{h}_{k},\forall k\in\mathcal{K}$ are i.i.d.,
the new random variables $\mathbf{m}^{\mathrm{H}}\mathbf{h}_{k},k\in\mathcal{K}$
are i.i.d.. Their modulus $\left\Vert \mathbf{m}^{\mathrm{H}}\mathbf{h}_{k}\right\Vert ,\forall k\in\mathcal{K}$
are also i.i.d., which are Rayleigh distributed. By defining a new
variable $Z_{k}=\left\Vert \mathbf{m}^{\mathrm{H}}\mathbf{h}_{k}\right\Vert $,
the probability density function (PDF) is expressed as
\begin{equation}
f(z_{k})=2z_{k}e^{-z_{k}^{2}},z_{k}\geq0,
\end{equation}
whose cumulative density probability (CDF) is given by
\begin{equation}
\mathbb{P}\left(Z_{k}<z_{k}\right)=1-e^{-z_{k}^{2}},z_{k}\geq0.\label{eq:CDF}
\end{equation}

\subsection{MSE Minimization}

For the problem of MSE minimization, the device subset is obtained by arranging the equivalent channel gain based on the sampled aggregate beamforming vector from a complex unit sphere. In this case, the obtained MSE performance
given $\mathbf{m}$ is written as
\begin{equation}
\mathrm{\frac{MSE}{\sigma^{2}}}=\underset{\mathcal{S}\subseteq\mathcal{K}}{\min}\enskip\underset{k\in\mathcal{S}}{\max}\frac{1}{P\left\Vert \mathbf{m}^{\mathrm{H}}\mathbf{h}_{k}\right\Vert ^{2}}.
\end{equation}
Defining a new random variable $Y_{k}=\left(P\left\Vert \mathbf{m}^{\mathrm{H}}\mathbf{h}_{k}\right\Vert ^{2}\right)^{-1}$,
we then have $Y_{k}=\left(PZ_{k}^{2}\right)^{-1}$, which is i.i.d..
According to (\ref{eq:CDF}), the CDF of $Y_{k}$ is obtained as
\begin{equation}
F(y_{k})  =\mathbb{P}\left(Y_{k}<y_{k}\right)=\mathbb{P}\left(Z_{k}>\left(\sqrt{Py_{k}}\right)^{-1}\right)=e^{-\left(Py_{k}\right)^{-1}},y_{k}>0.\label{eq:CDF1}
\end{equation}
Thereby, the PDF of variable $Y_{k}$ is obtained by deriving the
derivative of (\ref{eq:CDF1}), given by
\begin{equation}
f\left(y_{k}\right)=\left(Py_{k}^{2}\right)^{-1}e^{-\left(Py_{k}\right)^{-1}},y_{k}>0.\label{eq:pdf1}
\end{equation}
As mentioned, variables $Y_{1},Y_{2},\ldots,Y_{K}$ for all devices
are independent with the same distribution, which can be considered
as a sequence of variables sampled from the absolutely continuous
population with PDF $f(y)$ in (\ref{eq:pdf1}) and CDF $F\left(y\right)$
in (\ref{eq:CDF1}).

Define another variable $X$ as the objective value of problem (\ref{eq:SIMO1_1}),
i.e.,
\begin{equation}
X=\underset{\mathcal{S}\subseteq\mathcal{K}}{\min}\enskip\underset{k\in\mathcal{S}}{\max}\frac{1}{P\left\Vert \mathbf{m}^{\mathrm{H}}\mathbf{h}_{k}\right\Vert ^{2}}=\underset{\mathcal{S}\subseteq\mathcal{K}}{\min}\enskip\underset{k\in\mathcal{S}} {\max}Y_{k}.
\end{equation}
Let $Y_{1:K}\leq Y_{2:K}\leq\cdots\leq Y_{K:K}$ be the order statistics
obtained by arranging the variables $Y_{1},Y_{2},\ldots,Y_{K}$ in
an ascending order. We then have $X=Y_{S:K}$. The CDF of $X$ can
be derived without much difficulty such that
\begin{align}
 & G\left(x\right)=\mathbb{P}\left(X<x\right)=\mathbb{P}\left(Y_{S:K}<x\right)=1-\mathbb{P}\left(Y_{S:K}>x\right)\nonumber \\
 & =1-\mathbb{P}\left(\mbox{at most }S-1\mbox{ of }Y_{1},Y_{2},\ldots,Y_{K}\mbox{ are at most }x\right)\nonumber \\
 & =1-\sum_{s=0}^{S-1}\mathbb{P}\left(\mbox{exactly }s\mbox{ of }Y_{1},Y_{2},\ldots,Y_{K}\mbox{ are at most }x\right)\nonumber \\
 & =1-\sum_{s=0}^{S-1}\binom{K}{s}\left[\mathbb{P}\left(Y<x\right)\right]^{s}\left[\mathbb{P}\left(Y>x\right)\right]^{K-s}\nonumber \\
 & =1-\sum_{s=0}^{S-1}\binom{K}{s}e^{-s\left(Px\right)^{-1}}\left[1-e^{-\left(Px\right)^{-1}}\right]^{K-s}\label{eq:CDF2}
\end{align}

In a large-scale distributed system, there are a large population
of devices involving in the implementation of edge intelligence. Thereby,
it is also important to analyze the obtained MSE performance when $K$
is large. The following lemma establishes the asymptotic distribution
of a central order statistic when $K\rightarrow\infty$.
\begin{lem}
\label{thm:infinity_K}Denoting $q=S/K$, as $K\rightarrow\infty$,
we have
\[
\sqrt{K}f\left(F^{-1}\left(q\right)\right)\frac{\left(X-F^{-1}\left(q\right)\right)}{\sqrt{q\left(1-q\right)}}\stackrel{d}{\rightarrow}\mathcal{N}\left(0,1\right),
\]
where $F^{-1}\left(\cdot\right)$ is the inverse function of CDF $F$
in (\ref{eq:CDF1}), $\stackrel{d}{\rightarrow}$ means convergence
in distribution.
\end{lem}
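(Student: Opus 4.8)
The plan is to recognize $X = Y_{S:K}$ as a sample $q$-quantile and to establish the stated asymptotic normality — the classical central limit theorem for central order statistics — by working directly from the binomial representation already exhibited in \eqref{eq:CDF2}. Throughout I take $q$ fixed in $(0,1)$ and $S = S_K$ with $S_K - qK = O(1)$ (so in particular $S_K/K \to q$, the $O(1/K)$ error being negligible at the $1/\sqrt K$ scale at which we work). Write $\xi_q := F^{-1}(q)$; since $F(y) = e^{-(Py)^{-1}}$ in \eqref{eq:CDF1} is strictly increasing and infinitely differentiable on $(0,\infty)$ with density $f(\xi_q) > 0$ from \eqref{eq:pdf1}, the quantile $\xi_q$ is well defined and $F$ admits a first-order Taylor expansion at $\xi_q$.

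First I would fix $t \in \mathbb{R}$ and study $\mathbb{P}\bigl(\sqrt K (X - \xi_q) \le t\bigr) = G(x_K)$ at $x_K := \xi_q + t/\sqrt K$. Let $N_K(x) = \#\{k : Y_k \le x\}$, which is $\mathrm{Binomial}\bigl(K, F(x)\bigr)$ because the $Y_k$ are i.i.d.; then the identity behind \eqref{eq:CDF2} is simply $G(x) = \mathbb{P}(Y_{S:K} \le x) = \mathbb{P}\bigl(N_K(x) \ge S\bigr)$. Putting $p_K := F(x_K) = q + f(\xi_q)\,t/\sqrt K + o(1/\sqrt K)$ and standardizing,
\[
G(x_K) = \mathbb{P}\!\left(\frac{N_K(x_K) - Kp_K}{\sqrt{Kp_K(1-p_K)}}\ \ge\ \frac{S - Kp_K}{\sqrt{Kp_K(1-p_K)}}\right),
\]
and the deterministic threshold on the right satisfies
\[
\frac{S - Kp_K}{\sqrt{Kp_K(1-p_K)}} = \frac{(S - qK)/\sqrt K - f(\xi_q)\,t + o(1)}{\sqrt{p_K(1-p_K)}}\ \longrightarrow\ \frac{-f(\xi_q)\,t}{\sqrt{q(1-q)}},
\]
using $S - qK = O(1)$ and $p_K \to q$.

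Next I would apply the de Moivre--Laplace (equivalently Lindeberg--Feller) central limit theorem to the triangular array of $\mathrm{Bernoulli}(p_K)$ summands making up $N_K(x_K)$; this is legitimate because $p_K \to q \in (0,1)$ stays bounded away from $0$ and $1$, and a Berry--Esseen bound for the binomial makes the normal approximation uniform, so it may be evaluated at the moving threshold. Combining the CLT, the convergence of the threshold, and the continuity of the standard normal CDF $\Phi$ yields $G(x_K) \to \Phi\!\bigl(f(\xi_q)\,t/\sqrt{q(1-q)}\bigr)$ by the symmetry of $\Phi$. Since $\Phi(t/\sigma)$ is the CDF of $\mathcal{N}(0,\sigma^2)$ with $\sigma = \sqrt{q(1-q)}/f(\xi_q)$, this is exactly $\sqrt K\,(X - \xi_q) \stackrel{d}{\rightarrow} \mathcal{N}\!\bigl(0,\, q(1-q)/f(\xi_q)^2\bigr)$, which, after multiplying through by $\sqrt K\, f(\xi_q)/\sqrt{q(1-q)}$, is the assertion of the lemma.

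The main obstacle is the rigorous handling of the normal approximation at the $K$-dependent argument: one must know the binomial CLT holds with enough uniformity (in the parameter $p_K$ and in the evaluation point) that $\mathbb{P}\bigl(N_K(x_K) \ge S\bigr)$ still converges to the corresponding normal tail, which is why I would invoke a quantitative estimate — Berry--Esseen for the binomial, whose constant is controlled once $p_K$ stays away from $\{0,1\}$ — rather than the bare CLT. An equivalent and perhaps cleaner route sidesteps this via the probability integral transform: $Y_k \stackrel{d}{=} F^{-1}(U_k)$ with $U_k$ uniform, so $X \stackrel{d}{=} F^{-1}(U_{S:K})$; one first establishes $\sqrt K\,(U_{S:K} - q) \stackrel{d}{\rightarrow} \mathcal{N}(0, q(1-q))$ from the $\mathrm{Beta}(S, K-S+1)$ law of $U_{S:K}$, and then finishes with the delta method applied to $F^{-1}$, using $(F^{-1})'(q) = 1/f(\xi_q)$ (inverse function theorem, valid since $f(\xi_q) > 0$). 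All remaining ingredients — the Taylor expansion of $F$, positivity of $f(\xi_q)$ for $q \in (0,1)$, and negligibility of $S_K - qK$ — are routine given the explicit smooth expressions in \eqref{eq:CDF1}--\eqref{eq:pdf1}.
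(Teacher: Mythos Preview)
Your proof is correct, but it takes a substantially different route from the paper's. The paper does not prove the asymptotic normality from scratch: it merely observes that $F$ and $f$ from \eqref{eq:CDF1}--\eqref{eq:pdf1} are continuous, that $q\in(0,1)$, and that $f(F^{-1}(q))>0$, and then invokes a standard result on the asymptotic normality of central order statistics (Theorem~8.5.1 in the cited order-statistics reference). In other words, the paper's ``proof'' is a one-line verification of hypotheses for a black-box theorem.

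What you have supplied is essentially a self-contained proof of that black-box theorem, via the binomial representation $G(x)=\mathbb{P}(N_K(x)\ge S)$ combined with the de~Moivre--Laplace CLT (with Berry--Esseen to handle the moving threshold), and you also sketch the alternative uniform-order-statistic/delta-method route. Both are standard textbook proofs of the result the paper cites. Your approach buys independence from the external reference and makes explicit where smoothness and positivity of $f$ enter; the paper's approach buys brevity. Either is acceptable here, and your handling of the triangular-array issue (the $K$-dependent parameter $p_K$) is appropriately careful---more careful, in fact, than the paper needed to be given its reliance on the cited theorem.
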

\begin{proof}
According to (\ref{eq:CDF1}) and (\ref{eq:pdf1}), both CDF $F$
and PDF $f$ are continuous in their domains. Since $0<S<K$, we have
$q\in\left(0,1\right)$ and then have $F^{-1}\left(q\right)=-\frac{1}{P\ln q}>0$.
Therefore, the condition for Theorem 8.5.1 in \cite{order_statistics}
is satisfied\footnote{Let $X_{i:n}$ be the $i$-th order statistic from $n$ random variables.
The theorem is elaborated such that for $0<p<1$, let $F$ be absolutely
continuous with PDF $f$ which is positive at $F^{-1}\left(p\right)$
and is continuous at that point. For $i=\left\lfloor np\right\rfloor +1$,
as $n\rightarrow\infty$, we have $\sqrt{n}f\left(F^{-1}\left(p\right)\right)\frac{\left(X_{i:n}-F^{-1}\left(p\right)\right)}{\sqrt{p\left(1-p\right)}}\stackrel{d}{\rightarrow}\mathcal{N}\left(0,1\right)$.}. This completes the proof.
\end{proof}
Lemma \ref{thm:infinity_K} shows that random variable $X$ is asymptotically
normal after suitable normalization. It also indicates that the expectation
and the variance of MSE performance $X$ can be approximated with
probability $q=S/K$, expressed by

\begin{equation}
\mathbb{E}\left(X\right)\simeq F^{-1}\left(q\right),\label{eq:appro_mu}
\end{equation}

\begin{equation}
\mathrm{VAR}\left(X\right)\simeq\frac{q\left(1-q\right)}{K\left[f\left(F^{-1}\left(q\right)\right)\right]^{2}}=\frac{\left(1-q\right)}{KP^{2}q\left(\ln q\right)^{4}},\label{eq:appro_var}
\end{equation}
where $F^{-1}\left(q\right)$ and $f\left(F^{-1}\left(q\right)\right)$
are obtained from (\ref{eq:CDF1}) and (\ref{eq:pdf1}) such that
\[
F^{-1}\left(q\right)=-\frac{1}{P\ln q},\:f\left(F^{-1}\left(q\right)\right)=Pq\left(\ln q\right)^{2}.
\]

\begin{thm}
\label{thm:optimal}The proposed random aggregate beamforming-based
method for problem (\ref{eq:SIMO1_1}) approaches the optimal solution
when $K\rightarrow\infty$ and $S\ll K$.
\end{thm}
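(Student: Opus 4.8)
The plan is to sandwich the true optimum of (\ref{eq:SIMO1_1}) between $0$ and the value attained by the random scheme, and then to show that the latter vanishes in the regime $K\to\infty$, $S\ll K$. Recall from the analysis above that, once the unit vector $\mathbf{m}$ has been sampled (independently of all channels), the quantities $Y_k=\left(P\left\Vert \mathbf{m}^{\mathrm H}\mathbf{h}_k\right\Vert ^2\right)^{-1}$ are i.i.d.\ with CDF $F$ in (\ref{eq:CDF1}), and the subset $\mathcal S$ returned by the method collects the $S$ devices with the smallest $Y_k$; hence the normalized MSE produced by the scheme equals the order statistic $X=Y_{S:K}$. Let $X^{\star}$ denote the normalized MSE of the global optimum of (\ref{eq:SIMO1_1}). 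Because the pair $(\mathbf{m},\mathcal S)$ produced by the method is feasible for (\ref{eq:SIMO1_1}), we have $0\le X^{\star}\le X$, and hence $0\le X-X^{\star}\le X$. Thus it suffices to prove $X\stackrel{p}{\rightarrow}0$: this simultaneously forces $X^{\star}\to0$ and the optimality gap $X-X^{\star}\to0$, which is exactly the assertion.

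To establish $X\to0$, put $q=S/K$. Lemma~\ref{thm:infinity_K} together with its corollaries (\ref{eq:appro_mu})--(\ref{eq:appro_var}) suggests $\mathbb E(X)\simeq F^{-1}(q)=-\tfrac{1}{P\ln q}$ and $\mathrm{VAR}(X)\simeq\tfrac{1-q}{KP^{2}q(\ln q)^{4}}=\tfrac{1-q}{P^{2}S(\ln q)^{4}}$. Since $S\ll K$ forces $q\to0$ and thus $\ln q\to-\infty$, we obtain $\mathbb E(X)\to0$ and, using $S\ge1$, $\mathrm{VAR}(X)\le\tfrac{1}{P^{2}(\ln q)^{4}}\to0$, so Chebyshev's inequality gives $X\stackrel{p}{\rightarrow}0$. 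However, the central order-statistic CLT (Theorem~8.5.1 in \cite{order_statistics}) behind Lemma~\ref{thm:infinity_K} is stated for a \emph{fixed} quantile $p\in(0,1)$, while here $q=S/K\to0$; to make the argument airtight I would replace that step by a direct estimate that avoids the CLT: for any fixed $\epsilon>0$ the number of devices with $Y_k\le\epsilon$ is $\mathrm{Binomial}\!\left(K,e^{-1/(P\epsilon)}\right)$, whose mean is a constant fraction of $K$, so once $S/K<e^{-1/(P\epsilon)}$ (which holds for all large $K$) a Chernoff or weak-law bound gives $\mathbb P\!\left(Y_{S:K}>\epsilon\right)=\mathbb P\!\left(\text{fewer than }S\text{ of the }Y_k\text{ are}\le\epsilon\right)\to0$, i.e.\ $X\stackrel{p}{\rightarrow}0$.

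Combining the two steps, $\mathbb P\!\left(X-X^{\star}>\epsilon\right)\le\mathbb P\!\left(X>\epsilon\right)\to0$ for every $\epsilon>0$, so the objective value of the random aggregate beamforming-based method converges in probability to that of the optimal solution of (\ref{eq:SIMO1_1}) as $K\to\infty$ with $S\ll K$, which proves Theorem~\ref{thm:optimal}. The main obstacle is exactly the $q\to0$ regime, which lies outside the hypotheses of Lemma~\ref{thm:infinity_K}, so the concentration of $Y_{S:K}$ near $0$ has to be argued separately (as above); once that is in place, the feasibility sandwich $0\le X^{\star}\le X$ and the final squeeze are routine. If one prefers $L^{1}$ convergence of the gap rather than convergence in probability, the remaining task is to check uniform integrability of $\{Y_{S:K}\}$, which holds because its moments are bounded whenever $S<K$.
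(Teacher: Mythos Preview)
Your argument is correct and follows the same skeleton as the paper: use feasibility of the random pair $(\mathbf{m},\mathcal S)$ to sandwich $0\le X^{\star}\le X$, then show that $X$ vanishes because $q=S/K\to 0$ forces $F^{-1}(q)=-\tfrac{1}{P\ln q}\to 0$. The paper's own proof stops at the level of expectations, simply plugging $q\to 0$ into the approximation $\mathbb E(X)\simeq F^{-1}(q)$ from (\ref{eq:appro_mu}) and concluding $\mathbb E(X)\to 0$, hence $\mathbb E(X^{\star})\to 0$. Where you differ is that you correctly flag the technical issue the paper does not address---Lemma~\ref{thm:infinity_K} (the central order-statistic CLT) is stated for a \emph{fixed} quantile, so invoking it when $q\to 0$ is formally outside its hypotheses---and you supply a self-contained replacement: the binomial/Chernoff count of $\{k:Y_k\le\epsilon\}$ gives $\mathbb P(Y_{S:K}>\epsilon)\to 0$ directly, yielding convergence in probability of the gap. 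This is a genuine strengthening of the paper's argument; the paper's version is heuristic at exactly the point you identified, whereas your direct tail bound makes the $q\to 0$ step rigorous without any CLT. Your closing remark about uniform integrability for $L^1$ convergence is also on target and goes beyond what the paper establishes.
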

\begin{proof}
For the number of devices $K\rightarrow\infty$ and a fixed number
$S\ll K$ , we have $q=S/K\rightarrow0$. The limit of the right terms
of (\ref{eq:appro_mu}) and (\ref{eq:appro_var}) are respectively
given by
\begin{equation}
\lim_{q\rightarrow0}F^{-1}\left(q\right)=\lim_{q\rightarrow0}-\frac{1}{P\ln q}=0.
\end{equation}
Thus, we have $\mathbb{E}\left(X\right)\rightarrow0$ for $K\rightarrow\infty$
and $S\ll K$.

Denoting $X^{*}$ as the optimal objective of problem (\ref{eq:SIMO1}),
we have $X\geq X^{*}>0$. Then the expectations of $X$ and $X^{*}$
satisfy:
\begin{equation}
\mathbb{E}\left(X\right)\geq\mathbb{E}\left(X^{*}\right)>0.
\end{equation}
From the above observations, we have $\mathbb{E}\left(X^{*}\right)\rightarrow0$
and further arrive at $\mathbb{E}\left(X\right)=\mathbb{E}\left(X^{*}\right)$,
which indicates that the proposed random aggregate beamforming-based
method approaches the optimal solution for $K\rightarrow\infty,S\ll K$.
\end{proof}

\subsection{Maximum Number of Selected Devices}

Similar to the previous subsection, the solution of the aggregate
beamforming vector $\mathbf{m}$ to problem (\ref{eq:SIMO3})
is determined by randomly sampling a vector from $\mathcal{M}$.
Then the devices that guarantee the MSE constraint are selected, i.e.,
\begin{align}
\mathcal{S} & =\left\{ k|\mathrm{MSE}_{k}\leq\overline{\mathrm{MSE}},\forall k\in\mathcal{K}\right\} \nonumber\\
 & =\left\{ k|\frac{1}{P\left\Vert \mathbf{m}^{\mathrm{H}}\mathbf{h}_{k}\right\Vert ^{2}}\leq\frac{\overline{\mathrm{MSE}}}{\sigma^{2}},\forall k\in\mathcal{K}\right\} \nonumber\\
 & =\left\{ k|Y_{k}\leq\tilde{x},\forall k\in\mathcal{K}\right\} ,
\end{align}
where $\tilde{x}=\text{\ensuremath{\overline{\mathrm{MSE}}}/\ensuremath{\sigma^{2}}}$.
Defining the discrete random variable $\Lambda=\left|\mathcal{S}\right|$,
the probability mass function (PMF) is obtained as
\begin{align}
\mathbb{P}\left(\varLambda=S\right) & =\mathbb{P}\left(\mbox{Exactly }S\mbox{ of }Y_{1},Y_{2},\ldots,Y_{K}\mbox{ are less than }\tilde{x}\right)\nonumber \\
 & =\binom{K}{S}\prod_{i=1}^{S}\mathbb{P}\left(Y<\tilde{x}\right)\prod_{i=1}^{K-S}\mathbb{P}\left(Y>\tilde{x}\right)\nonumber \\
 & =\binom{K}{S}e^{-S\left(P\tilde{x}\right)^{-1}}\left(1-e^{-\left(P\tilde{x}\right)^{-1}}\right)^{K-S}.\label{eq:PFM}
\end{align}

In a large-scale system where there are massive devices, we arrive
at the following theorem.
\begin{thm}
\label{thm:MSE_thre}Denote $\Lambda_{\min}=\frac{K}{e^{1/\left(P(\tilde{x}-3\sigma)\right)}}$,
$\Lambda_{\max}=\frac{K}{e^{1/\left(P(\tilde{x}+3\sigma)\right)}}$,
where $\sigma=\sqrt{\frac{1-\exp\left(-\frac{1}{P\tilde{x}}\right)}{KP^{2}\exp\left(-\frac{1}{P\tilde{x}}\right)\left(-\frac{1}{P\tilde{x}}\right)^{4}}}$.
The random variable $\Lambda$ is approximately symmetric within the
interval of $[\Lambda_{\min},\Lambda_{\max}]$ with the expectation
of \textup{$\mathbb{E}\left(\Lambda\right)=Ke^{-1/\left(P\tilde{x}\right)}$}
when $K\rightarrow\infty$.

\end{thm}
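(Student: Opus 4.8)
The plan is to recognize that $\Lambda$ is a binomial random variable and then combine a central limit theorem with the order-statistic asymptotics of Lemma~\ref{thm:infinity_K}. By the characterization $\mathcal{S}=\{k: Y_k\le\tilde x\}$ with $\tilde x=\overline{\mathrm{MSE}}/\sigma^2$, we have $\Lambda=\sum_{k=1}^{K}\mathbf{1}\{Y_k\le\tilde x\}$, a sum of $K$ i.i.d.\ Bernoulli trials with success probability $p=F(\tilde x)=e^{-1/(P\tilde x)}$ by (\ref{eq:CDF1}). Hence $\Lambda\sim\mathrm{Binomial}(K,p)$, which immediately gives $\mathbb{E}(\Lambda)=Kp=Ke^{-1/(P\tilde x)}$ and $\mathrm{VAR}(\Lambda)=Kp(1-p)$, settling the stated expectation.

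For the symmetry and the interval, first I would invoke the classical de Moivre--Laplace central limit theorem: $(\Lambda-Kp)/\sqrt{Kp(1-p)}\stackrel{d}{\rightarrow}\mathcal{N}(0,1)$ as $K\to\infty$, so $\Lambda$ is asymptotically symmetric about its mean $Kp$ and, by the usual three-sigma band, lies with high probability in $[\,Kp-3\sqrt{Kp(1-p)},\,Kp+3\sqrt{Kp(1-p)}\,]$. It then remains to show that these two numerical endpoints coincide, to leading order, with $\Lambda_{\min}=KF(\tilde x-3\sigma)$ and $\Lambda_{\max}=KF(\tilde x+3\sigma)$. To this end I would note that the $\sigma$ appearing in the statement is exactly the asymptotic standard deviation of the central order statistic $X=Y_{S:K}$ of Lemma~\ref{thm:infinity_K} evaluated at $q=p$: substituting $q=p$ into (\ref{eq:appro_var}) and using $\ln p=-1/(P\tilde x)$ reproduces $\sigma^2=\tfrac{(1-p)P^2\tilde x^4}{Kp}$. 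A first-order Taylor expansion of $F$ about $\tilde x$ gives $F(\tilde x\pm3\sigma)=p\pm 3\sigma f(\tilde x)+O(\sigma^2)$, and a short computation using $f(\tilde x)=p/(P\tilde x^2)$ yields the identity $K\sigma f(\tilde x)=\sqrt{Kp(1-p)}$; hence $\Lambda_{\min}=Kp-3\sqrt{Kp(1-p)}+O(K\sigma^2)$ and $\Lambda_{\max}=Kp+3\sqrt{Kp(1-p)}+O(K\sigma^2)$, which matches the three-sigma band since $O(K\sigma^2)=O(1)$ is negligible relative to the $O(\sqrt K)$ half-width as $K\to\infty$.

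An alternative, more intrinsic route I would keep in reserve is to use the order-statistic duality $\{\Lambda\ge S\}=\{Y_{S:K}\le\tilde x\}$ together with Lemma~\ref{thm:infinity_K} applied to $Y_{S:K}$: the event $Y_{S:K}\le\tilde x$ holds with high probability when $F^{-1}(S/K)\lesssim\tilde x-3\sigma$ and fails with high probability when $F^{-1}(S/K)\gtrsim\tilde x+3\sigma$, so $\Lambda$ is sandwiched between $KF(\tilde x-3\sigma)$ and $KF(\tilde x+3\sigma)$, reproducing $\Lambda_{\min}$ and $\Lambda_{\max}$ without passing through the binomial CLT.

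The main obstacle is not a single computation but the fact that the theorem's interval is specified through $F$ of a perturbed threshold rather than as ``mean $\pm$ three standard deviations''; the crux is to justify replacing the $S$-dependent order-statistic variance by its value $\sigma^2$ at $S=\mathbb{E}(\Lambda)$ and to control the resulting Taylor-expansion error. This is legitimate only asymptotically (the discrepancy is $o(\sqrt K)$ relative to the band width), which is precisely why the statement is phrased in the regime $K\to\infty$ and only asserts approximate symmetry.
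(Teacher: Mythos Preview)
Your proposal is correct, and your primary argument is actually cleaner than the paper's. The paper does \emph{not} start from the binomial structure of $\Lambda$ and the de Moivre--Laplace CLT; instead it works entirely through what you call the ``alternative, more intrinsic route''. Concretely, the paper sets $\mathbb{E}(X)\simeq F^{-1}(q)=\tilde x$ (from Lemma~\ref{thm:infinity_K}), solves for $q=e^{-1/(P\tilde x)}$, computes the corresponding $\sigma$, and then argues by cases: if the number of selected devices $S$ corresponds to an order-statistic mean below $\tilde x-3\sigma$, the MSE constraint is slack and more devices can be added; if it corresponds to a mean above $\tilde x+3\sigma$, the constraint is violated. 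From this sandwich the paper reads off $\Lambda_{\min}$ and $\Lambda_{\max}$ and asserts symmetry because the order-statistic variances are approximately constant over $[\Lambda_{\min},\Lambda_{\max}]$. The expectation $\mathbb{E}(\Lambda)=Ke^{-1/(P\tilde x)}$ is stated rather than derived from $\Lambda\sim\mathrm{Binomial}(K,p)$.

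What your approach buys is rigor and transparency: the expectation and symmetry follow instantly from the binomial law and the CLT, and your Taylor step with the identity $K\sigma f(\tilde x)=\sqrt{Kp(1-p)}$ makes explicit why the interval $[KF(\tilde x-3\sigma),KF(\tilde x+3\sigma)]$ coincides, to leading order, with the natural three-sigma band of $\Lambda$. The paper's route, by contrast, never quantifies the approximation error and relies on the heuristic that nearby order statistics have ``almost the same variance''. Your reserve argument via the duality $\{\Lambda\ge S\}=\{Y_{S:K}\le\tilde x\}$ is essentially the paper's proof, so you have both covered and improved upon it.
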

\begin{proof}
When the expectation of the MSE performance in the MSE minimization
problem equals to $\tilde{x}$ for $K\rightarrow\infty$, we have
\begin{align}
\mathbb{E}\left(X\right)\simeq  & F^{-1}\left(q\right)=-\frac{1}{P\ln q}=\tilde{x}\nonumber \\
\Leftrightarrow & -\frac{1}{P\ln q}=\tilde{x}\nonumber
\Leftrightarrow  q=\exp\left(-\frac{1}{P\tilde{x}}\right).
\end{align}
Then the variance of $X$ can be expressed as
\begin{equation}
\mathrm{VAR}\left(X\right)\simeq\frac{q\left(1-q\right)}{K\left[f\left(F^{-1}\left(q\right)\right)\right]^{2}}=\frac{1-\exp\left(-\frac{1}{P\tilde{x}}\right)}{KP^{2}\exp\left(-\frac{1}{P\tilde{x}}\right)\left(-\frac{1}{P\tilde{x}}\right)^{4}},
\end{equation}
which approaches $0$ when $K\rightarrow\infty$. Since the random
variable $X$ is asymptotically normal, variable $X$ is within the
interval of $\left[\tilde{x}-3\sigma,\tilde{x}+3\sigma\right]$ where
$\sigma=\sqrt{\frac{1-\exp\left(-\frac{1}{P\tilde{x}}\right)}{KP^{2}\exp\left(-\frac{1}{P\tilde{x}}\right)\left(-\frac{1}{P\tilde{x}}\right)^{4}}}$
is the standard deviation.

For any MSE performance variable $X'$ whose expectation $\mathbb{E}\left(X'\right)$
is within the interval of $\in\left[\tilde{x}-3\sigma,\tilde{x}+3\sigma\right]$,
its variance $\mathrm{VAR}\left(X'\right)$ approximately equals
$\mathrm{VAR}\left(X\right)$ owing to the small standard deviation
$\sigma$. Similarly, the probability $q'$ for variable $X'$
is written as $q=\exp\left(-\frac{1}{P\mathbb{E}\left(X'\right)}\right)$.
Thus, the corresponding number of the selected devices is $S=\left\lfloor \frac{K}{e^{1/\left(P\mathbb{E}\left(X'\right)\right)}}\right\rfloor $.
In the following, we consider two cases where $\mathbb{E}\left(X'\right)$
is not in the interval of $\left[\tilde{x}-3\sigma,\tilde{x}+3\sigma\right]$.

Case1: When $\mathbb{E}\left(X'\right)<\tilde{x}-3\sigma$, we have
$S<\frac{K}{e^{1/\left(P(\tilde{x}-3\sigma)\right)}}$ and $X'$ is
approximately within $\left[\mathbb{E}\left(X'\right)-3\sigma,\mathbb{E}\left(X'\right)+3\sigma\right]$.
Since $\mathbb{E}\left(X'\right)+3\sigma<\tilde{x}$, it shows that
selecting $S$ devices guarantees the MSE constraint, which further
suggests that more devices can be selected.

Case2: When $\mathbb{E}\left(X'\right)>\tilde{x}+3\sigma$, we have
$S>\frac{K}{e^{1/\left(P(\tilde{x}+3\sigma)\right)}}$ and $X'$ is
approximately within $\left[\mathbb{E}\left(X'\right)-3\sigma,\mathbb{E}\left(X'\right)+3\sigma\right]$.
Since $\mathbb{E}\left(X'\right)-3\sigma>\tilde{x}$, the MSE constraint
cannot be guaranteed if selecting $S>\frac{K}{e^{1/\left(P(\tilde{x}+3\sigma)\right)}}$
devices.

Thus, the approximate minimum and maximum number of the selected
devices under the random aggregate beamforming based method are respectively
\begin{equation}
\Lambda_{\min}=\frac{K}{e^{1/\left(P(\tilde{x}-3\sigma)\right)}},\quad\Lambda_{\mathrm{max}}=\frac{K}{e^{1/\left(P(\tilde{x}+3\sigma)\right)}}.
\end{equation}
As stated above, when the number of the selected devices $S\in[\Lambda_{\min},\Lambda_{\max}]$,
the obtained MSE performance has almost the same variance. Therefore,
the PMF of variable $\Lambda$ is symmetric within the interval of
$[\Lambda_{\min},\Lambda_{\max}]$ with the expectation of $\mathbb{E}\left(\Lambda\right)=Ke^{-1/\left(P\tilde{x}\right)}$.
This completes the proof.
\end{proof}

\section{Refined Method}

The previous section first gives the proposed random aggregate beamforming-based scheme,
and then presents theoretical analysis in terms of the two objectives
when $K$ is large. This section focuses on the performance improvement
of the proposed methods when $K\ll\infty$. Specifically, we refine
the proposed scheme by randomizing the vector $\mathbf{m}$ from
$\mathcal{M}$ for $N_{m}$ times. The solutions with the best performance
are obtained among $N_{m}$ aggregate beamforming vectors. The refined methods
for both MSE minimization and the number of selected devices maximization
are detailed in Algorithm \ref{alg:Random} and Algorithm \ref{alg:DeviceNumber}.
Additionally, the effectiveness of the refined methods are analyzed.
\begin{algorithm}[t]
	\small	
\caption{Random Aggregate Beamforming-based Design for MSE Minimization.}
	\textbf{Input}:  $N_m$, $S$ \quad	\textbf{Output}: $\mathcal{S}$, $\mathbf{m}$\quad
	\textbf{Initialize}: $\mathrm{val}=0$, $\mathrm{tmp_{max}}=0$
\begin{algorithmic}[1]	
\State \textbf{for} $n=1:N_m$
	\State \quad Sample $\mathbf{m}_r$ from $\mathcal{CN}(0,\mathbf{I})$
	\State \quad Normalize $\mathbf{m}_r=\mathbf{m}_r/\left\Vert \mathbf{m}_r\right\Vert $
	\State \quad Calculate $\mathrm{tmp_{k}}=\left\Vert \mathbf{m}_r^{\mathrm{H}}\mathbf{h}_{k}\right\Vert^2, \forall k \in \mathcal{K}$
	\State  \quad Find the $S$-th largest $\mathrm{tmp_{k}},\forall k \in \mathcal{K}$
	\State \quad If $\mathrm{tmp}>\mathrm{tmp_{max}}$.
	\State \quad \quad $\mathcal{S}=\left\{ k\:\vert\left\Vert \mathbf{m}_{r}^{\mathrm{H}}\mathbf{h}_{k}\right\Vert ^{2}\geq\mathrm{tmp_{max}},\forall k\in\mathcal{K}\right\} $
	\State \quad \quad $\mathbf{m}=\mathbf{m}_r$.
\end{algorithmic}
\label{alg:Random} 
\end{algorithm}

\begin{algorithm}[t]
	\small	
\caption{Random Aggregate Beamforming-based Design for the Number of Selected Devices Minimization.}
\textbf{Input}:$N_m$, $\overline{\mathrm{MSE}}/{\sigma^{2}}$\quad
\textbf{Output}: $\mathcal{S}_\mathrm{max}$, $\mathbf{m}$\quad
\textbf{Initialize}: $\mathrm{S_{max}}=0$
\begin{algorithmic}[1]
\State \textbf{for} $n=1:N_m$
\State \quad Sample $\mathbf{m}_r$ from $\mathcal{CN}(0,\mathbf{I})$
\State \quad Normalize $\mathbf{m}_r=\mathbf{m}_r/\left\Vert \mathbf{m}_r\right\Vert $
\State \quad Calculate $\mathrm{tmp}_{k}=\left(P\left\Vert \mathbf{m}_{r}^{\mathrm{H}}\mathbf{h}_{k}\right\Vert ^{2}\right)^{-1}, \forall k \in \mathcal{K}$
\State \quad $\mathcal{S}=\left\{k\,\vert\mathrm{tmp_{k}}\leq \overline{\mathrm{MSE}}/{\sigma^{2}},\forall k\in\mathcal{K}\right\}$
\State \quad If $\left|\mathcal{S}\right|>\left|\mathcal{S}_\mathrm{max}\right|$
\State \quad \quad $\mathcal{S}_\mathrm{max}=\mathcal{S}$
\State \quad \quad $\mathbf{m} = \mathbf{m}_r$.
\end{algorithmic}
\label{alg:DeviceNumber}
\end{algorithm}

Denote $\mathbf{m}^{*}$ and $\mathcal{S}^{*}$ as the the optimal
aggregate beamforming vector and selected device subset, respectively. According
to Lemma \ref{thm:infinity}, there are infinite optimal solutions,
i.e., $\mathbf{m}^{*}e^{j\theta},\forall\theta\in\mathcal{R}$,
with the same optimal objectives of the MSE and the number
of selected devices, which are denoted as $x^{*}$ and $\gamma^{*}$
respectively. As the aggregate beamforming vector $\mathbf{m}$
is uniformly sampled from the complex unit sphere $\mathcal{M}$,
the probability that the sampled vector $\mathbf{m}$ is optimal,
i.e., $\mathbb{P}(\mathbf{m}=\mathbf{m}^{*})$, can be interpreted
as the proportion of the region of the optimal solutions on the whole
sphere $\mathcal{M}$. Despite infinite optimal solutions of $\mathbf{m}$,
the probability can be very small. Nevertheless, we can model the probability of a sub-optimal solutions
of both problems in an implicit manner, which are detailed in the next subsections.

\subsection{MSE Minimization}

For problem (\ref{eq:SIMO1_1}), the optimal objective as $x^{*}$
is related to the channel of all devices $\mathbf{h}_{k},\forall k\in\mathcal{K}$.
Let $\varDelta\in\mathbb{R}^{+}$ be the difference between the optimal
objective and the sub-optimal one. Then, the probability that the
obtained sub-optimal objective under the proposed method less than a given value
$x$ where $x>x^{*}$ can be modelled as
\begin{equation}
\varsigma=\mathbb{P}\left(X<x;\left\{ \mathbf{h}_{k},\forall k\in\mathcal{K}\right\} \right)=\mathbb{P}\left(X<x^{*}+\varDelta\right).\label{eq:MSE_mulple}
\end{equation}

The optimal objective $x^{*}$ is difficult to obtain as the problem
(\ref{eq:SIMO1_1}) is non-convex. In addition, it is difficult to
obtain the distribution of objective $X$ under $\mathbf{h}_{k},\forall k\in\mathcal{K}$,
where $\mathbf{m}$ is uniformly distributed on the unit sphere.
Thus, the probability $\varsigma$ cannot be analytically expressed.
Instead, we model $\sigma$ as an implicit function of $x^{*}$ and
$\varDelta$, i.e., $\varsigma=H\left(\varDelta;x^{*}\right)$ with
the range of $\left[0,1\right]$. The function is monotonically nondecreasing
w.r.t. the difference $\varDelta>0$, which is useful for our later
analysis. To illustrate the distribution, we search the feasible solutions
to obtain $H\left(\varDelta;x^{*}\right)$ under one channel realization
in the scenario with $K=50$ devices and an aggregator equipped with
$N=2$ antennas. As can be seen in Fig. \ref{fig:MSE_probability},
for the optimal objective $x^{*}=0.455$, the probability $\varsigma$
is 0.01 when the difference $\varDelta=0.011$. By randomizing the
vector $\mathbf{m}$ from $\mathcal{M}=\left\{ \mathbf{m}|\text{\ensuremath{\left\Vert \mathbf{m}\right\Vert }=1},\mathbf{m}\in\mathbb{C}^{2}\right\} $
for $100$ times, we get $63.4\%$ chance that the obtanied objective
$x$ is whithin the interval of $\left[0.455,0.466\right]$.

\begin{figure}[tbh]
\begin{tabular}{cc}
\begin{minipage}[t]{0.5\textwidth} \includegraphics[width=1\columnwidth]{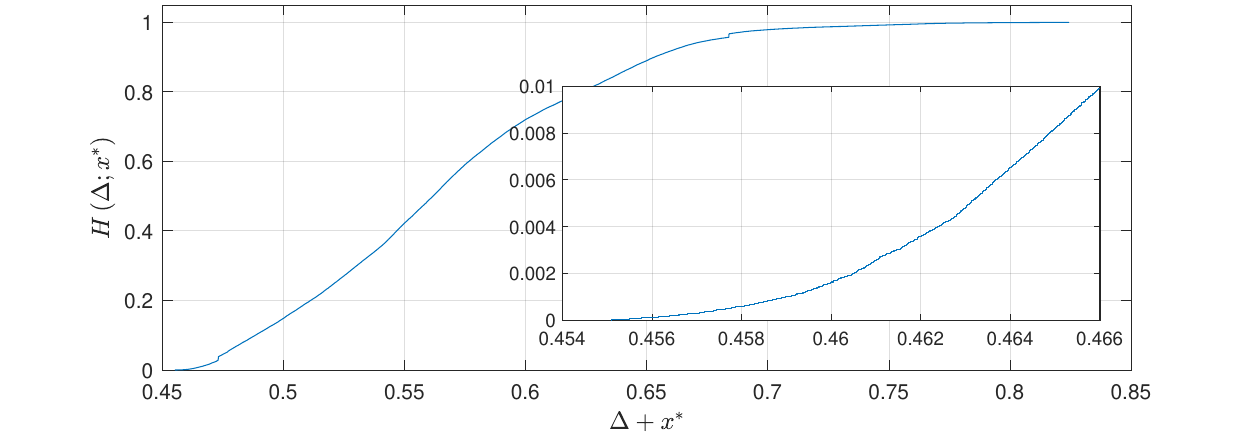}\captionsetup{font={small}}\caption{Distribution $H\left(\varDelta;x^{*}\right)$ under one channel realization with $S=10$ devices are selected.}
\label{fig:MSE_probability}
\end{minipage}
\begin{minipage}[t]{0.5\textwidth} \includegraphics[width=1\columnwidth]{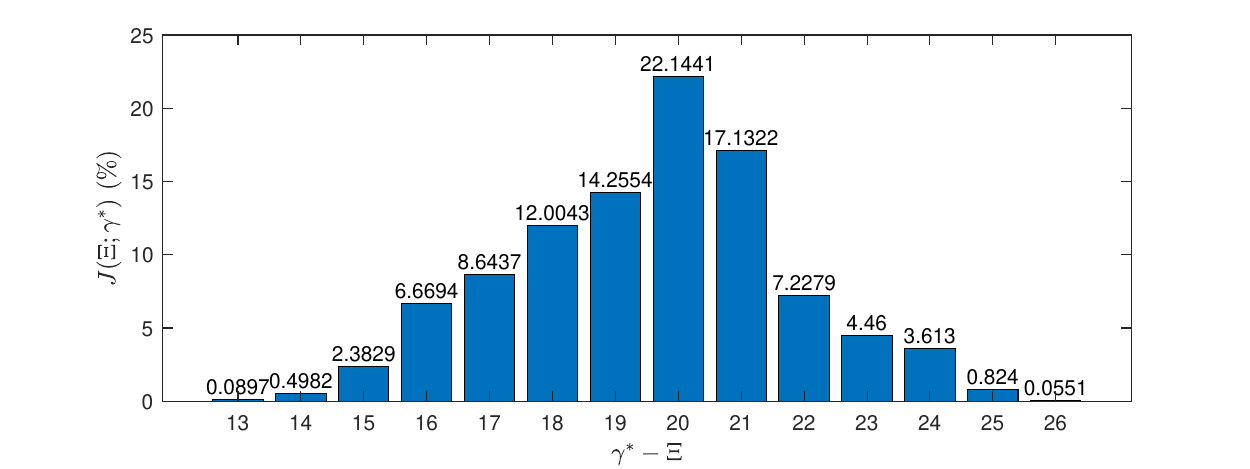}\captionsetup{font={small}}\caption{PMF function $\beta=J\left(\Xi;\gamma^{*}\right)$ under one channel realization with the threshold $\overline{\mathrm{MSE}}/\sigma^{2}=1$.}
\label{fig:Num_probability}
\end{minipage}         
\end{tabular}
\end{figure}


Considering $N_{m}$ randomizations of the aggregate vectors, we arrive
at the following theorem.
\begin{thm}
\label{thm:random_ana} Let $\bar{X}=\min\left(X_{1},\ldots,X_{i},\ldots,X_{N_{m}}\right)$
be the obtained objective value given $N_{m}$ random aggregate beamforming
vectors under the channel conditions of $\mathbf{h}_{k},\forall k\in\mathcal{K}$.
For any objective value difference \textup{$\varDelta>0$,} we have
\[
\lim_{N_{m}\rightarrow\infty}\mathbb{P}\left(\bar{X}<x^{*}+\varDelta\right)=1.
\]
\end{thm}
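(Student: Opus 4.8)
The plan is to exploit the independence of the $N_{m}$ sampled beamforming vectors together with the fact that the event $\{X<x^{*}+\varDelta\}$ has strictly positive probability for every $\varDelta>0$. Since $\mathbf{m}_{1},\ldots,\mathbf{m}_{N_{m}}$ are drawn i.i.d.\ uniformly from $\mathcal{M}$, the corresponding objective values $X_{1},\ldots,X_{N_{m}}$ are i.i.d.\ copies of $X$. Writing $\varsigma=H\left(\varDelta;x^{*}\right)=\mathbb{P}\left(X<x^{*}+\varDelta\right)$ as in (\ref{eq:MSE_mulple}), the complement event therefore factorizes,
\[
\mathbb{P}\left(\bar{X}\geq x^{*}+\varDelta\right)=\mathbb{P}\left(\bigcap_{i=1}^{N_{m}}\left\{ X_{i}\geq x^{*}+\varDelta\right\} \right)=\left(1-\varsigma\right)^{N_{m}},
\]
so everything reduces to showing $\varsigma>0$ and letting $N_{m}\rightarrow\infty$.

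The core step is to verify $\varsigma>0$. Let $\mathbf{m}^{*}$ be an optimal beamforming vector, so $X\left(\mathbf{m}^{*}\right)=x^{*}<\infty$. Viewed as a function of $\mathbf{m}$, the objective $X\left(\mathbf{m}\right)$ is the $S$-th smallest element of $\{(P\left\Vert \mathbf{m}^{\mathrm{H}}\mathbf{h}_{k}\right\Vert ^{2})^{-1}\}_{k\in\mathcal{K}}$. Because $x^{*}$ is finite, at least $S$ of the inner products $\mathbf{m}^{*\mathrm{H}}\mathbf{h}_{k}$ are nonzero, and the terms indexed by those $k$ are continuous on a relatively open neighborhood of $\mathbf{m}^{*}$ in $\mathcal{M}$; consequently the $S$-th order statistic $X\left(\cdot\right)$ is continuous at $\mathbf{m}^{*}$. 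By continuity there is a nonempty relatively open set $\mathcal{U}\subseteq\mathcal{M}$ containing $\mathbf{m}^{*}$ on which $X\left(\mathbf{m}\right)<x^{*}+\varDelta$. Since the sampling law is uniform on the sphere $\mathcal{M}$, it is fully supported and assigns positive probability to every nonempty relatively open subset, hence $\varsigma\geq\mathbb{P}\left(\mathbf{m}\in\mathcal{U}\right)>0$; this is also consistent with $H\left(\varDelta;x^{*}\right)$ being monotonically nondecreasing and strictly positive in $\varDelta>0$.

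With $0<\varsigma\leq 1$ fixed, we have $\left(1-\varsigma\right)^{N_{m}}\rightarrow 0$ as $N_{m}\rightarrow\infty$, and therefore
\[
\lim_{N_{m}\rightarrow\infty}\mathbb{P}\left(\bar{X}<x^{*}+\varDelta\right)=1-\lim_{N_{m}\rightarrow\infty}\left(1-\varsigma\right)^{N_{m}}=1,
\]
which is the claim. The main obstacle is the second paragraph: making rigorous that $\varsigma>0$, i.e.\ that the region of near-optimal beamforming vectors carries positive uniform measure rather than being a single isolated point on $\mathcal{M}$. This rests on the continuity of the $S$-th order statistic $X\left(\mathbf{m}\right)$ near an optimizer (so that a whole open patch of $\mathcal{M}$ is $\varDelta$-near-optimal) and on the elementary fact that the uniform distribution on the unit sphere is strictly positive on nonempty open sets; once this is in place the rest is a routine i.i.d.\ product computation.
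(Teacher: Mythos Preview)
Your proof is correct and follows the same route as the paper: factorize $\mathbb{P}\left(\bar{X}\geq x^{*}+\varDelta\right)$ via the i.i.d.\ structure into $\left(1-\varsigma\right)^{N_{m}}$ and send $N_{m}\rightarrow\infty$. The paper carries out exactly this computation (phrased as an $\epsilon$--$N$ argument) but never verifies that $\varsigma=H\left(\varDelta;x^{*}\right)>0$; your continuity-plus-full-support argument fills that gap, so your version is in fact more complete than the original.
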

\begin{proof}
Since $N_{m}$ random aggregate beamforming vectors are uniformly
sampled from the same complex unit sphere $\mathcal{M}$, the probability
of $\bar{X}$ to be less than $x=x^{*}+\varDelta$ is given by
\begin{align*}
\mathbb{P}\left(\bar{X}<x\right) & =\mathbb{P}\left(\bar{X}<x^{*}+\varDelta\right)\\
 & =1-\mathbb{P}\left(\bar{X}>x^{*}+\varDelta\right)\\
 & =1-\prod_{i=1}^{N_{m}}\mathbb{P}\left(X_{i}>x^{*}+\varDelta\right)\\
 & =1-\left(1-H\left(\varDelta;x^{*}\right)\right)^{N_{m}},
\end{align*}
where $H\left(\cdot\right)$ is a monotonic function w.r.t the difference
$\varDelta$.  Given $\varDelta>0$, for any $\delta>0$, the following
equivalence holds, i.e.,
\begin{align*}
 & \left|\mathbb{P}\left(\bar{X}<x^{*}+\varDelta\right)-1\right|<\delta\\
\Leftrightarrow & \left(1-H\left(\varDelta;x^{*}\right)\right)^{N_{m}}<\delta\\
\Leftrightarrow & N_{m}>\left\lfloor \log_{1-H\left(\varDelta;x^{*}\right)}\delta\right\rfloor ,
\end{align*}
where $\left\lfloor \cdot\right\rfloor $ is the floor function. Therefore,
given $\varDelta>0$, for $\forall\delta>0$, there exist a number
$N=\left\lfloor \log_{1-H\left(\varDelta;x^{*}\right)}\delta\right\rfloor $
such that when $N_{m}>N$, we have $\left|\mathbb{P}\left(\bar{x}<x^{*}+\varDelta\right)-1\right|<\delta$.
This completes the proof.
\end{proof}

\subsection{Maximizing The Number of Selected Devices}

As for the mixed combinatorial problem (\ref{eq:SIMO3}), the optimal
objective $\gamma^{*}$ is difficult to obtain, which does not have
the close-form expression w.r.t. $\mathbf{h}_{k},\forall k\in\mathcal{K}$.
Also, the PMF of the number of selected devices $\Lambda$ obtained
by the proposed method is unknown under $\mathbf{h}_{k},\forall k\in\mathcal{K}$.
Thus, the probability that the number of selected devices is greater
than a sub-optimal objection $\gamma$ cannot be explicitly expressed,
i.e.,
\begin{equation}
\beta=\mathbb{P}\left(\Lambda=\gamma\right)=\mathbb{P}\left(\Lambda=\gamma^{*}-\Xi\right),
\end{equation}
where $\Xi\in\mathbb{N}$ is the difference between the optimal number
of selected devices and the sub-optimal one. It is a function of $\gamma^{*}$
and $\Xi$, i.e., $\beta=J\left(\Xi;\gamma^{*}\right)$, which is
monotonically nondecreasing w.r.t. $\Xi$ within the range of $\left[0,1\right]$.
Likewise, we illustrate the probability $J\left(\Xi;\gamma^{*}\right)$
by searching the feasible solutions under one channel realization
in a scenario with $K=50$ devices and an aggregator equipped with
$N=2$ antennas. As can be seen in Fig. \ref{fig:Num_probability},
the optimal objective $\gamma^{*}=26$, the probability of which
is $0.0551\%$. In this case, $1000$ times of randomization can lead
to $42.37\%$ chance to obtain the optimal solution.


To improve the performance, $N_{m}$ randomizations of the aggregate
vectors are considered. The following Theorem can be arrived.
\begin{thm}
\label{thm:random_ana-1} Let $\bar{\varLambda}=\min\left(\varLambda_{1},\cdots,\varLambda_{i},\cdots,\varLambda_{N_{m}}\right)$
be the obtained objective value via the refined method, where $\varLambda_{i}$
is the objective value under the $i$-th random vector $\mathbf{m}_{i}$.
For any objective value difference \textup{$\Xi\in\mathbb{N}$,} we
have
\[
\lim_{N_{m}\rightarrow\infty}\mathbb{P}\left(\bar{\varLambda}>\gamma^{*}-\Xi\right)=1.
\]
\end{thm}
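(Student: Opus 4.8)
The plan is to reuse the template of the proof of Theorem~\ref{thm:random_ana}, replacing the continuous order statistic by the best of $N_{m}$ integer-valued draws. Since the refined scheme samples $\mathbf{m}_{1},\dots,\mathbf{m}_{N_{m}}$ i.i.d.\ uniformly from $\mathcal{M}$, the per-run counts $\Lambda_{1},\dots,\Lambda_{N_{m}}$ are i.i.d.\ copies of $\Lambda$ under the fixed realization $\{\mathbf{h}_{k}\}$, and $\bar\Lambda$ is the largest among them (the value that Algorithm~\ref{alg:DeviceNumber} retains). Hence $\{\bar\Lambda>\gamma^{*}-\Xi\}$ fails exactly when every run selects at most $\gamma^{*}-\Xi$ devices, so by independence
\begin{equation}
\mathbb{P}\left(\bar\Lambda>\gamma^{*}-\Xi\right)=1-\left(\mathbb{P}\left(\Lambda\le\gamma^{*}-\Xi\right)\right)^{N_{m}}.
\end{equation}

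Next I would rewrite the base. Because $\Lambda\le\gamma^{*}$ almost surely, by optimality of $\gamma^{*}$, and $\mathbb{P}(\Lambda=\gamma^{*}-\xi)=J(\xi;\gamma^{*})$, one gets $\mathbb{P}(\Lambda>\gamma^{*}-\Xi)=\sum_{\xi=0}^{\Xi-1}J(\xi;\gamma^{*})$, so the right-hand side becomes $1-\left(1-\sum_{\xi=0}^{\Xi-1}J(\xi;\gamma^{*})\right)^{N_{m}}$. It then suffices to check that the base lies in $[0,1)$ for every $\Xi\ge1$: it is trivially $\le1$, and it is $<1$ once $\sum_{\xi=0}^{\Xi-1}J(\xi;\gamma^{*})\ge J(0;\gamma^{*})=\mathbb{P}(\Lambda=\gamma^{*})>0$. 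I would establish $\mathbb{P}(\Lambda=\gamma^{*})>0$ by a continuity argument: if $\mathbf{m}^{*}$ attains $\gamma^{*}$ through a device set $\mathcal{S}^{*}$, then for generic channels the $\gamma^{*}$ selection conditions $\left(P\Vert(\mathbf{m}^{*})^{\mathrm{H}}\mathbf{h}_{k}\Vert^{2}\right)^{-1}\le\tilde{x}$, $k\in\mathcal{S}^{*}$, hold with strict inequality, and since $\mathbf{m}\mapsto\left(P\Vert\mathbf{m}^{\mathrm{H}}\mathbf{h}_{k}\Vert^{2}\right)^{-1}$ is continuous they persist on an open neighborhood of $\mathbf{m}^{*}$ in $\mathcal{M}$, a set of strictly positive probability under the uniform law; on that neighborhood at least $\gamma^{*}$ devices, hence exactly $\gamma^{*}$, are selected, so $\mathbb{P}(\Lambda=\gamma^{*})>0$ (consistent with the numerically observed $0.0551\%$). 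With the base in $[0,1)$, sending $N_{m}\to\infty$ drives the power to $0$; the formal step is identical to that in Theorem~\ref{thm:random_ana} (for any $\delta>0$, every $N_{m}>\left\lceil\log_{1-\sum_{\xi<\Xi}J(\xi;\gamma^{*})}\delta\right\rceil$ works), which yields $\lim_{N_{m}\to\infty}\mathbb{P}(\bar\Lambda>\gamma^{*}-\Xi)=1$.

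I expect the middle step --- certifying that the base is strictly below $1$ --- to be the main obstacle, since $J(\cdot;\gamma^{*})$ is specified only implicitly and Remark~\ref{thm:infinity} on its own gives merely the one-parameter family $\{\mathbf{m}^{*}e^{j\theta}\}$ of optimal beamformers, which is measure zero in $\mathcal{M}$; it is the open-condition argument that upgrades the mere existence of an optimum to a positive probability of landing near it, and one must invoke genericity of $\{\mathbf{h}_{k}\}$ to rule out tight selection constraints at $\mathbf{m}^{*}$. A minor caveat to record is that the claim requires $\Xi\ge1$: for $\Xi=0$ one has $\mathbb{P}(\bar\Lambda>\gamma^{*})=0$, so $\mathbb{N}$ must be read here as $\{1,2,\dots\}$. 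Everything else is the same telescoping-plus-limit pattern already used for the MSE problem.
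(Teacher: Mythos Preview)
Your approach is essentially the paper's: use the i.i.d.\ structure of the $N_m$ draws to write $\mathbb{P}(\bar\Lambda>\gamma^*-\Xi)=1-(1-p)^{N_m}$ and let $N_m\to\infty$. The paper's proof is a two-line reference back to Theorem~\ref{thm:random_ana}; it simply writes $p=J(\Xi;\gamma^*)$ and invokes the $N>\lfloor\log_{1-J}\delta\rfloor$ step.

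Two points where you are more careful than the paper. First, the paper plugs the \emph{point mass} $J(\Xi;\gamma^*)=\mathbb{P}(\Lambda=\gamma^*-\Xi)$ directly into the base, whereas you correctly use the tail $\sum_{\xi<\Xi}J(\xi;\gamma^*)=\mathbb{P}(\Lambda>\gamma^*-\Xi)$; for the limit statement this distinction is harmless (both quantities are positive once $\Xi\ge1$), but your version is the one that actually matches the event $\{\bar\Lambda>\gamma^*-\Xi\}$. Second, the paper never argues that the base is strictly below $1$; your open-neighborhood continuity argument around $\mathbf m^*$ supplies exactly the missing justification that $J(0;\gamma^*)>0$, and your observation that the claim needs $\Xi\ge1$ (since $\mathbb{P}(\bar\Lambda>\gamma^*)=0$) is a genuine caveat the paper omits. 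You also correctly read $\bar\Lambda$ as the maximum retained by Algorithm~\ref{alg:DeviceNumber}, despite the ``$\min$'' typo in the statement.
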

\begin{proof}
The proof is similar to that of Theorem \ref{thm:random_ana}.
The probability of $\bar{\varLambda}$ to be greater than $\gamma=\gamma^{*}-\Xi$
is implicitly expressed as $\mathbb{P}\left(\bar{\varLambda}>\gamma\right)=\mathbb{P}\left(\bar{\varLambda}>\gamma^{*}-\Xi\right)=1-\left(1-J\left(\Xi;\gamma^{*}\right)\right)^{N_{m}}.$ Given
$\Xi\in\mathbb{N}$, for $\forall\delta>0$, there exists the number
$N=\left\lfloor \log_{1-J\left(\Xi;\gamma^{*}\right)}\delta\right\rfloor $
such that when $N_{m}>N$, we have $\left|\mathbb{P}\left(\bar{\varLambda}>\gamma^{*}-\Xi\right)-1\right|<\delta$.
This completes the proof.
\end{proof}

\subsection{Observations}

When $K\ll\infty$, we refine the proposed methods in Section IV by
randomizing $N_{m}$ aggregate beamforming vectors aiming at the performance
improvement. The proposed methods in Section IV have a computational
complexity of $\mathcal{O}\left(K\right)$, and thereby the required computational
complexity for the refined methods is $\mathcal{O}\left(N_{m}K\right)$.
 From the above analysis, the gaps between the sub-optimal performance
and the optimal ones are less than $\varDelta$ and $\Xi$ when $N_{m}>\left\lfloor \log_{1-H\left(\varDelta;x^{*}\right)}\delta\right\rfloor $
and $N_{m}>\left\lfloor \log_{1-J\left(\Xi;\gamma^{*}\right)}\delta\right\rfloor $
where $\delta>0$ is arbitrary small. Reversely, given the randomization
number $N_{m}$, the corresponding differences are respectively $\varDelta<H^{-1}\left(1-\sqrt[N_{m}]{\delta};x^{*}\right)$
and $\Xi<J^{-1}\left(1-\sqrt[N_{m}]{\delta};\gamma^{*}\right)$.
Despite that the inverse functions $H^{-1}\left(\cdot\right)$ and
$J^{-1}\left(\cdot\right)$ cannot be explicitly expressed, the refined
methods with multiple randomizations provide an obvious insight
of performance improvement. The reason behind is that both functions
$H$ and $J$ are monotonically nondecreasing w.r.t. $\varDelta$
and $\Xi$, which indicate that the inverse functions are also monotonically
nondecreasing w.r.t. $1-\sqrt[N_{m}]{\delta}$. Therefore, a larger
randomlization $N_{m}$ leads to a small differences $\varDelta$
and $\Xi$ from the optimal objectives.

\section{Simulation}

This section presents the simulation results to demonstrate the effectiveness
of the proposed random aggregate beamforming-based methods as well
as the refined ones. Besides, the theoretical analysis is confirmed.
The maximum transmit power of the devices is set to $P=0$ dB. To
showcase the advantage of the proposed method, we compare it to reference
methods in terms of the MSE performance and the number of the selected
devices for the investigated problems. For the problem of MSE minimization,
the benchmarks are
\begin{itemize}
\item Iterative device selection and aggregate beamforming design: Given
the selected devices, the aggregate beamforming vector is obtained
by utilizing DC method. After the aggregate beamforming vector is
obtained, arrange the equivalent channel power and select the devices
with $S$ largest values. The solutions are obtained by alternatively
solving the above two subproblems until convergence.
\item Random device selection and aggregate beamforming design: Randomly
select $S$ out of $K$ devices, and optimize the aggregate beamforming
vector by DC technique.
\end{itemize}
The problem of the number of selected devices maximization can be
modelled as a sparse problem. DC representation and $\ell_{1}-\mathrm{norm}$
techniques are the two most common methods to handle the sparsity.
The comparing benchmarks are:
\begin{itemize}
\item A novel DC algorithm \cite{kaiyang}: The DC representation is adopted
to induce the sparsity, based on which the DC technique is used to
obtain the aggregate beamforming vector.
\item $\ell_{1}$+SDR \cite{SDR}: The sparsity is induced by the $\ell_{1}-\mathrm{norm}$
technique, and the aggregate beamforming vector is obtained by SDR
method.
\end{itemize}

\subsection{Analysis Verification}

To verify the theoretical analysis in terms of the MSE and the selected
number under the proposed random aggregate beamforming vector, we
conduct 1000 Monte Carlo simulations, where the number of antennas
at the aggregator is set to $N=8$.

For the problem of MSE minimization, we consider two settings where
the fixed numbers of selected devices $S$ are $10$ and $20$ respectively.
Fig. \ref{fig:CDF} displays the theoretical and empirical CDFs under
these two settings, where the coinciding lines validate the derived
distribution of $\mathrm{MSE}/\sigma^{2}$ in (\ref{eq:CDF2}). By
comparing Fig. \ref{fig:CDF}(a) and Fig. \ref{fig:CDF}(b), we can
observe that selecting more devices leads to a larger aggregate error,
because the scaling factor $\eta$ shall be increased in order to
align signals from more devices. In addition, as the increase of the
number of devices $K$, the obtained $\mathrm{MSE}/\sigma^{2}$ together
with its variance decrease with the trends towards $0$. To confirm
the Lemma \ref{thm:infinity_K}, we compare the CDF of normal distribution
with mean $F^{-1}\left(q\right)$ and variance $\frac{\left(1-q\right)}{KP^{2}q\left(\ln q\right)^{4}}$
with the numerical one, which is displayed in Fig. \ref{fig:CDF_norm}.
It shows that the distribution of the obtained $\mathrm{MSE}/\sigma^{2}$
converges to the normal distribution as the increase of $K$, which
confirms Lemma \ref{thm:infinity_K}. Moreover, the mean and variance
of $\mathrm{MSE}/\sigma^{2}$ are shown to approach $0$ when $K$
is getting larger, which validates Theorem \ref{thm:optimal}.


\begin{figure}[t]
\begin{tabular}{cc}   
\begin{minipage}[t]{0.5\textwidth}
\centering   
\subfigure[]{\includegraphics[width=0.45\columnwidth]{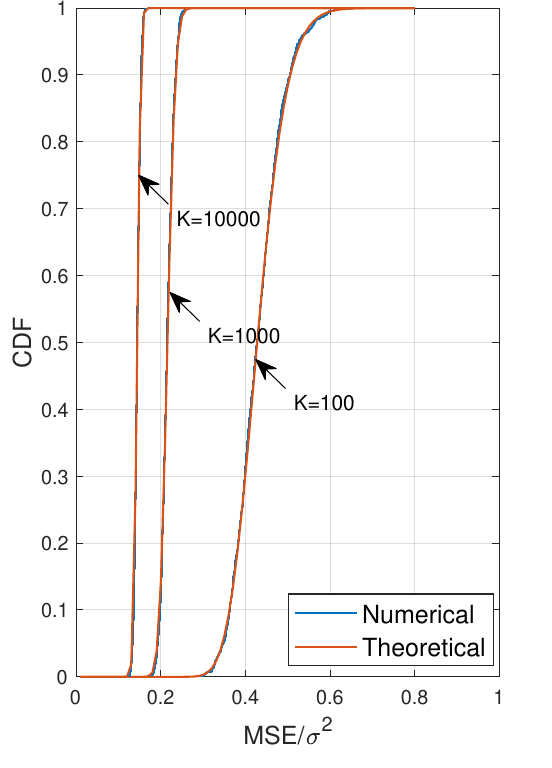}}
\subfigure[] {\includegraphics[width=0.45\columnwidth]{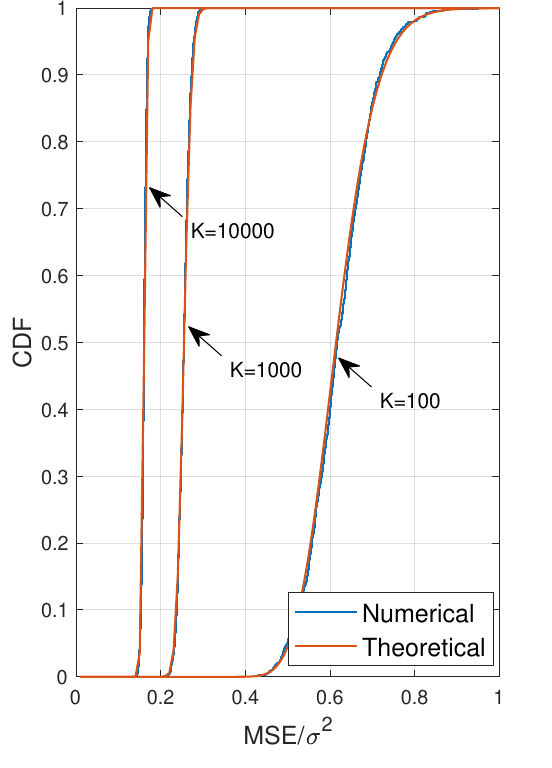}}
\captionsetup{font={small}}\caption{Derived and numerical CDFs. (a) $S=10$. (b) $S=20$.}
\label{fig:CDF}
\end{minipage}
\begin{minipage}[t]{0.5\textwidth}		
\includegraphics[width=0.96\columnwidth]{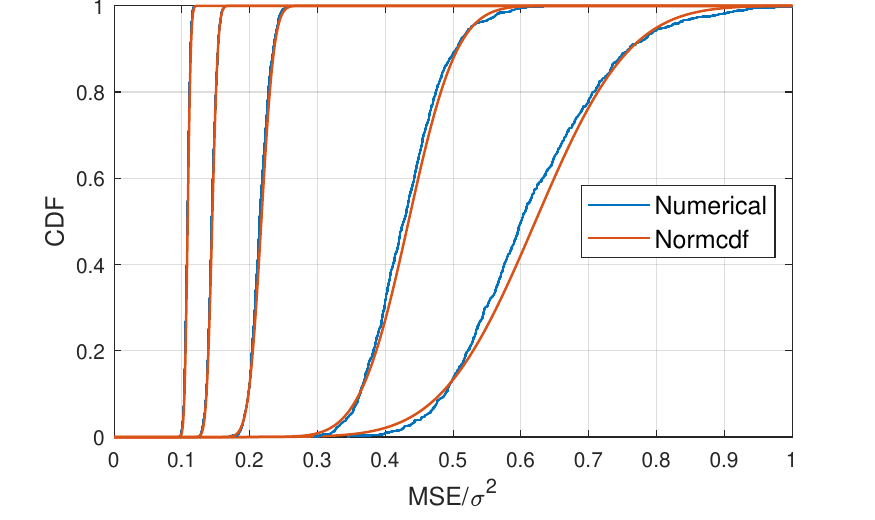}
\captionsetup{font={small}}\caption{Empirical CDF and the approximated normal distribution. The number of the devices from right to left are respectively: $K=50,K=10^2,K=10^3,K=10^4,K=10^5$.}
\label{fig:CDF_norm}
\end{minipage}
\end{tabular}
\end{figure}

%

For the problem of the maximum number of selected devices, we present
both the numerical PMF and the the derived one, where the MSE threshold
is set to $\overline{\mathrm{MSE}}/\sigma^{2}=0.2$. As is shown in Fig.
\ref{fig:PMF}, the numerical PMF has approximately the same number
of the selected devices as the derived one, which ensures the correctness
of (\ref{fig:PMF}). In addition, the PMF becomes symmetric with the
increase of $K$. To verify Theorem \ref{thm:MSE_thre}, we calculate
the average, minimum and maximum number of selected devices in the
scenarios with the threshold $\overline{\mathrm{MSE}}/\sigma^{2}$
set to $0.2$, $0.35$ and $0.5$. Table \ref{tab:Rep_accuracy} gives
the average, minimum and maximum number of the selected devices. From
the Table, we can see that the average number of the selected devices
are almost the same for both numerical and theoretical results. The
numerical minimum number is smaller than the theoretical one, while
the numerical maximum number is larger than the theoretical one. These
relative gaps are diminished as the increase of $K$, which verifies
Theorem \ref{thm:MSE_thre}.


\begin{figure}
\centering   
\subfigure[] {\includegraphics[width=0.48\columnwidth]{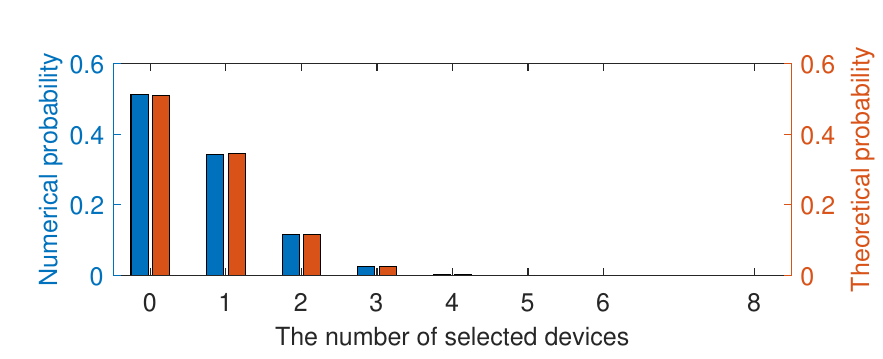}} 
\subfigure[] {\includegraphics[width=0.48\columnwidth]{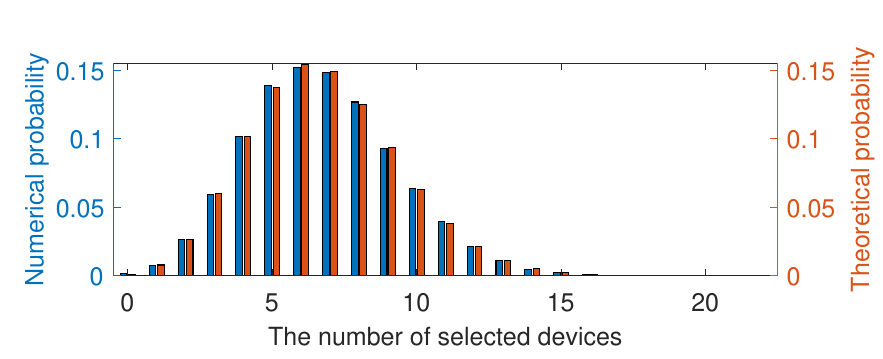}} 
\subfigure[] {\includegraphics[width=0.48\columnwidth]{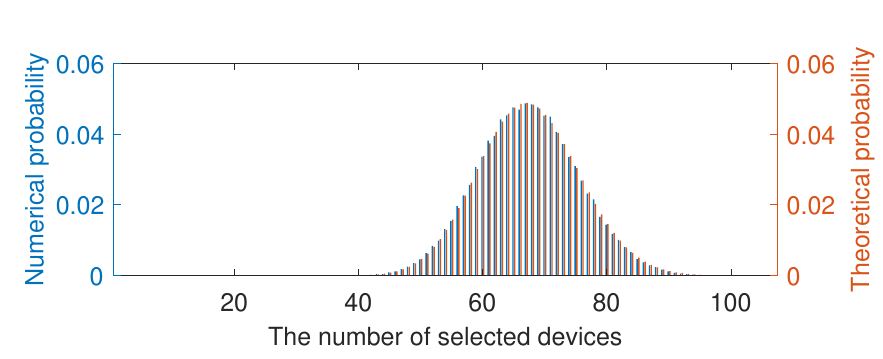}}
\captionsetup{font={small}}\caption{Numerical and theoretical PMFs. (a) $K=10^2$. (b) $K=10^3$. (c) $K=10^4$.}
\label{fig:PMF}
\end{figure}

\begin{table}[t]
\centering
\begin{tabular}{>{\centering}m{1.2cm}|>{\centering}m{1.4cm}|>{\centering}m{1cm}|>{\centering}m{1cm}|>{\centering}m{1cm}|>{\centering}m{1cm}|>{\centering}m{1cm}|>{\centering}m{1cm}|>{\centering}m{1cm}|>{\centering}m{1cm}|>{\centering}m{1cm}}
\hline
\multirow{2}{2.5cm}{Scenario} & Threshold $\overline{\mathrm{MSE}}/\sigma^{2}$ & \multicolumn{3}{c|}{0.2} & \multicolumn{3}{c|}{0.35} & \multicolumn{3}{c}{0.5}\tabularnewline
\cline{2-2} \cline{3-3} \cline{4-4} \cline{5-5} \cline{6-6} \cline{7-7} \cline{8-8} \cline{9-9} \cline{10-10} \cline{11-11}
 & Stochastic & Average &Minimum &Maximum & Average & Minimum & Maximum & Average & Minimum & Maximum\tabularnewline
\hline
\multirow{2}{2.5cm}{ $K=10^4$} & Numerical & 67.36 & 37 & 103 & 574.21 & 479 & 656 & 1353.49 & 1234 & 1487\tabularnewline
\cline{2-2} \cline{3-3} \cline{4-4} \cline{5-5} \cline{6-6} \cline{7-7} \cline{8-8} \cline{9-9} \cline{10-10} \cline{11-11}
 & Theoretical & 67.38 & 52.20 & 84.94 & 574.33 & 528.38 & 621.41 & 1353.35 & 1284.95 & 1421.76\tabularnewline
\hline
\multirow{2}{2.5cm}{ $K=10^5$} & Numerical & 674.45 & 585 & 778 & 5744.21 & 5546 & 6046 & 13533.16 & 13133 & 13908\tabularnewline
\cline{2-2} \cline{3-3} \cline{4-4} \cline{5-5} \cline{6-6} \cline{7-7} \cline{8-8} \cline{9-9} \cline{10-10} \cline{11-11}
 & Theoretical & 673.79 & 598.86 & 754.09 & 5743.26 & 5596.68 & 5890.97 & 13533.53 & 13317.18 & 13749.87\tabularnewline
\hline
\end{tabular}\captionsetup{font={small}}\caption{Average, minimum and maximum values of the number of the selected
devices.}
\label{tab:Rep_accuracy}
\end{table}

\subsection{MSE Performance}

In order to showcase the performance of the proposed random aggregate
beamforming-based design, we conduct simulations for 100 channel realizations.
The proposed methods are compared to the benchmarks previously detailed
in the scenarios with different number of antennas $N$ at the aggregator.
We label the iterative device selection and aggregate beamforming
design and the random device selection and aggregate beamforming design
as `Benchmark1' and `Benchmark2' respectively. As shown in Fig. \ref{fig:MSE_Ant}(a),
MSE performance $\mathrm{MSE}/\sigma^{2}$ obtained by both benchmarks
are decreasing as the increase of $N$. This is due to the fact that
more deployment of antennas provides more degree of freedom to align
the signals from the selected devices. Obtaining the aggregate beamforming
vector via the optimization methods can enjoy the advantage of multiple
antennas. In contrast, sample a vector from the complex unit sphere
as the aggregate beamforming vector cannot exploit the merit of multiple
antennas. Hence, the obtained MSE performance under the proposed method
remains unchanged w.r.t. the number of antennas $N$. When $K$ grows,
the obtained $\mathrm{MSE}/\sigma^{2}$ by the proposed method and
`Benchmark1' is getting smaller, while it is unaffected under `Benchmark2'.
These phenomenons suggest that our proposed method and `Benchmark1' can reap the benefit from the device diversity, while `Benchmark2'
cannot. By comparing the proposed method and `Benchmark1', we can
observe that our proposed method is inferior to `Benchmark1', since
the optimized beamforming vector enjoy the benefit of multiple antennas
at each iteration is also a vector in the complex unit sphere. Besides,
the gap is shrinking as the increase of $K$, which means that the
proposed method can approach `Benchmark1'. It is worth noting that
the computation for two benchmarks to obtaining $\mathbf{m}$
is intensive, which is negligible in our proposed random beamforming
design.


\begin{figure}
\subfigure[]{\includegraphics[width=0.45\columnwidth]{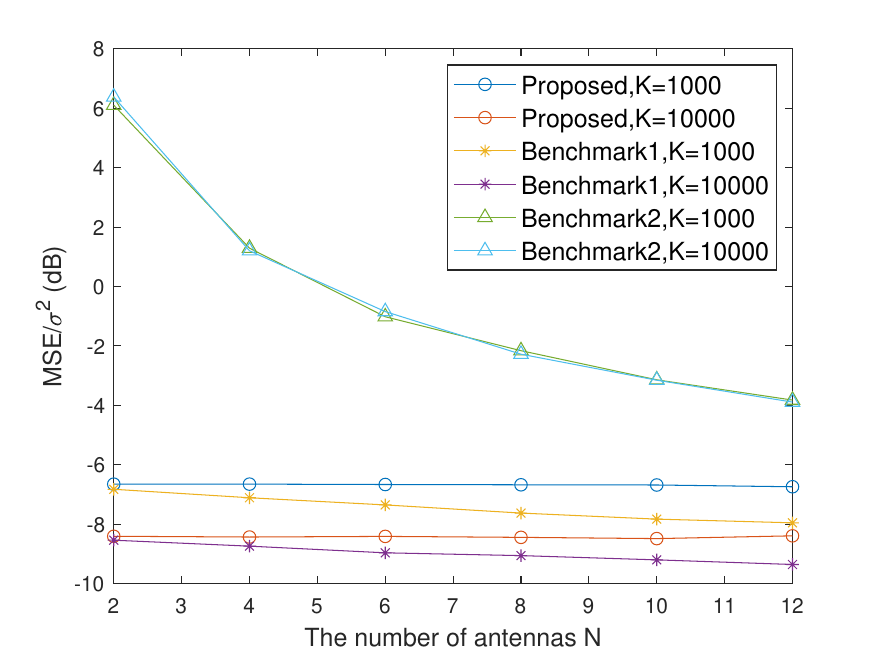}}
\subfigure[]{\includegraphics[width=0.45\columnwidth]{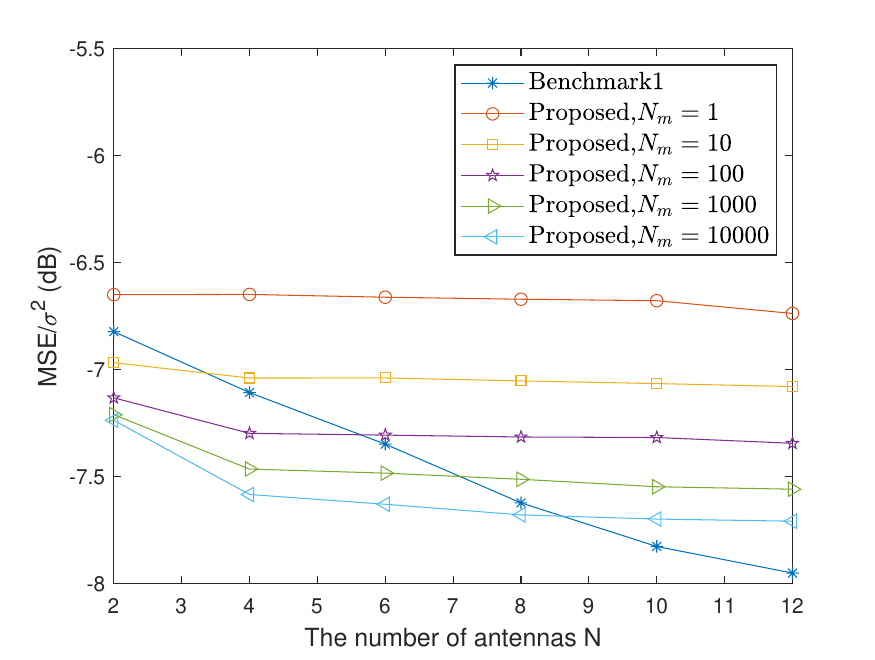}}
\captionsetup{font={small}}\caption{MSE performance versus number of antennas. (a) Comparison of the proposed scheme with two benchmarks with different number of devices.  (b) Proposed scheme with a fixed number of devices $K=1000$ and different randomlization numbers.}
\label{fig:MSE_Ant}
\end{figure}

To verify the MSE performance improvement by the refined method, we
compare the refined methods with `Benchmark1' in the scenario consisting
of $K=10^3$ devices and an aggregator with different number of antennas. Fig.
\ref{fig:MSE_Ant}(b) gives the MSE performance of `Benchmark1' and
the refined method under $N_{m}=1,10,10^{2},10^{3},10^{4}$ randomizations.
The figure shows that the MSE performance can be improved by sampling
the vectors from the unit sphere for multiple times, which verifies
the effectiveness of the proposed refined method. Besides, the performance
improving rates become slower w.r.t. the number of randomizations $N_{m}$.
When there are $N=2,4,6$ antennas at the aggregator, the performance
under $N_{m}=10$, $10^2$ and $10^3$ randomizations can respectively
achieve better performance than `Benchmark1'. A larger $N_{m}$ is
required in order to get the performance outperforming `Benchmark1'
if more antennas are deployed. This can be explained such that the
space of the unit sphere is much larger as the increase of $N$.

\subsection{Maximum Number of Selected Devices Performance}

This subsection presents the performance of the number of the selected
devices w.r.t. the threshold $\overline{\mathrm{MSE}}/\sigma^{2}$,
the range of which is set $[-6,+6]$ dB. We consider three settings
where there are $N=4$ antennas at the aggregator and $K=50,100$, and $150$
devices in the system. We compare the proposed random aggregate beamforming
based and the refined methods with the the novel DC and $\ell_{1}$+SDR
algorithms, which are label as `DC' and `$\ell_{1}$+SDR' respectively.
As shown in Fig. \ref{fig:maxnum}, more devices are selected if generating
more random aggregate beamforming vectors, which is referred to as
the refined method. However, the improvement rate becomes slower as
the increase of $N_{m}$. The $\ell_{1}$+SDR method is shown to have
the poorest performance except for the case of $\overline{\mathrm{MSE}}/\sigma^{2}=-6$
dB. When the threshold is set to $-6$ dB, the $\ell_{1}$+SDR method
achieves slightly better performance than our proposed random aggregate
beamforming based method, but shows inferior performance compared
with the proposed refined one. The number of selected devices $\left|\mathcal{S}\right|$
shows a growing trend w.r.t. the threshold $\overline{\mathrm{MSE}}/\sigma^{2}$.
In a small $\overline{\mathrm{MSE}}/\sigma^{2}$ region, `DC' method
achieves the best performance compared with the other two methods.
In the intermediate values of $[-2,+2]$, the proposed method outperforms `DC' method,
the gap of which is shrunk in the large region. Additionally, Fig.
\ref{fig:maxnum} shows that the advantage of our proposed method
becomes more evident when there are more devices. It is worth noting
here that the computational complexity of our proposed method is $\mathcal{O}\left(N_{m}K\right)$,
which is negligible compared with both `DC' and `$\ell_{1}$+SDR'
reference methods.
\begin{figure}
	\centering
	\subfigure[] {\includegraphics[width=0.31\columnwidth]{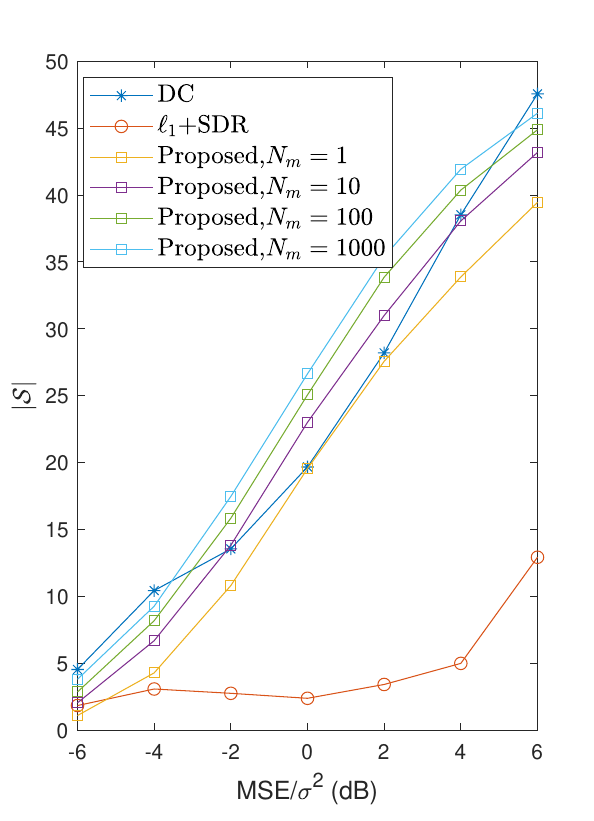}}
	\subfigure[] {\includegraphics[width=0.31\columnwidth]{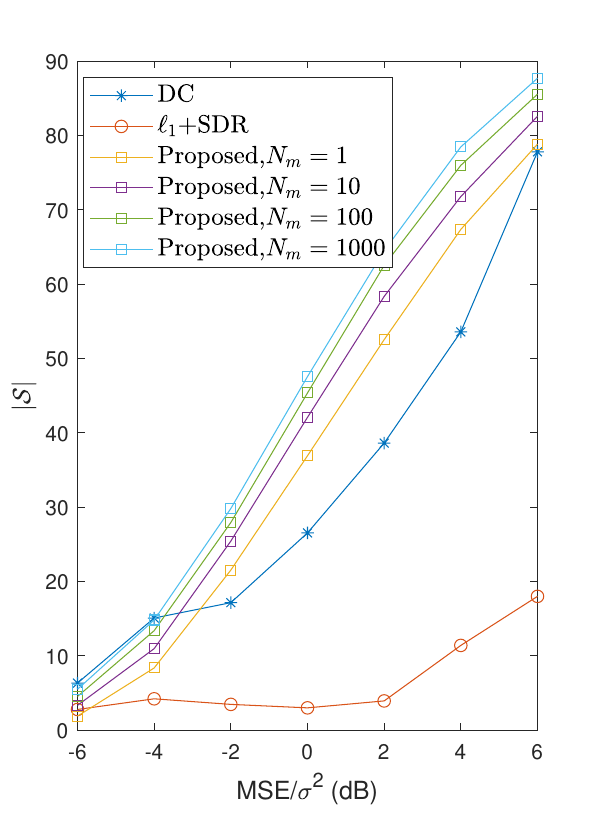}}
	\subfigure[] {\includegraphics[width=0.31\columnwidth]{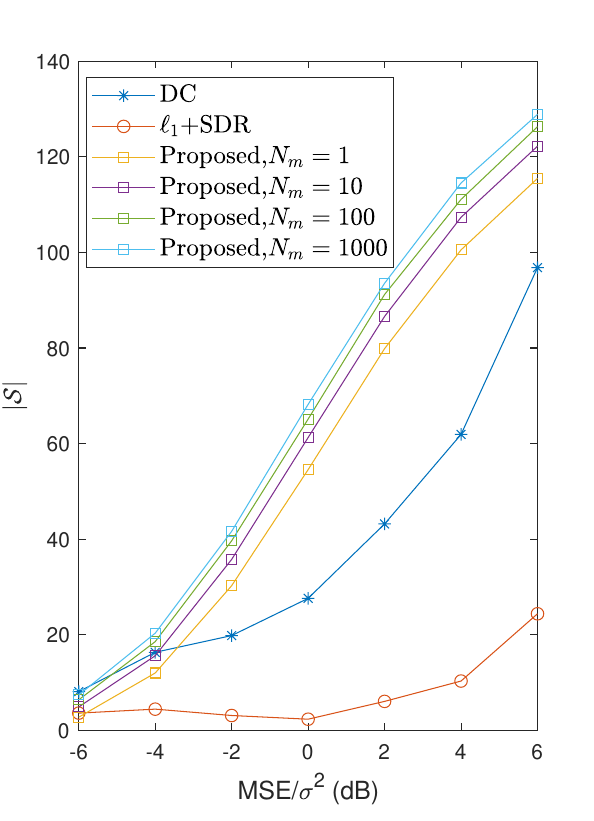}}
	\vspace{-0.2em}
	\captionsetup{font={small}}\caption{Number of selected devices $\left|\mathcal{S}\right|$
		versus MSE with different number of devices. (a) $K=50$. (b)
		$K=100$. (c) $K=150$.}
\label{fig:maxnum}
\end{figure}

\subsection{Learning Performance}

The previous subsection presents the performance from the communication
perspective. This subsection illustrates the impact of the MSE and
the selected devices on the learning performance. We utilize FL framework
to train the standard image classification tasks on two well-known
datasets, i.e., MNIST10 and CIFAR10. MNIST10 dataset consisting of
$10$ classes of black-and-white handwritten digital picture is easier
to learn, where a multilayer perception (MLP) neural network is adopted.
By contrast, the color pictures of CIFAR10 are much difficult to learn,
and thereby a more complex ResNet18 neural network is used. For simplicity, the effect of the aggregate error on the learning performance is modelled as
the model retransmission, the probability of which is $p=1-\exp(-a\mathrm{MSE}/\sigma^{2})$
where parameter $a$ is set to $1$.

For the impact of MSE on the learning performance, we consider the
scenario where $S=10$ devices are selected from $K=100$ devices
for model aggregation at the aggregator equipped with $N=4$ antennas.
The obtained average MSE performance $\mathrm{MSE}/\sigma^{2}$ under
the proposed method, `Benchmark1' and `Benchmark2' methods are respectively
$0.3221$, $0.3410$ and $1.3531$. Fig. \ref{fig:mnist_noniid_mse}
presents the testing accuracy on both datasets, which are non-i.i.d.
distributed among the devices, which shows the slowest convergence
rate under `Benchmark2'. This can be explained such that a smaller
$\mathrm{MSE}/\sigma^{2}$ leads to a smaller retransmission probability
and more validate training rounds. Since the MSE performance under
our proposed method and `Benchmark1' have approximately the same value,
they have almost the same validate training epochs and thereby approximately
the same test accuracy.

\begin{figure}
	\centering
	\subfigure[] {\includegraphics[width=0.48\columnwidth]{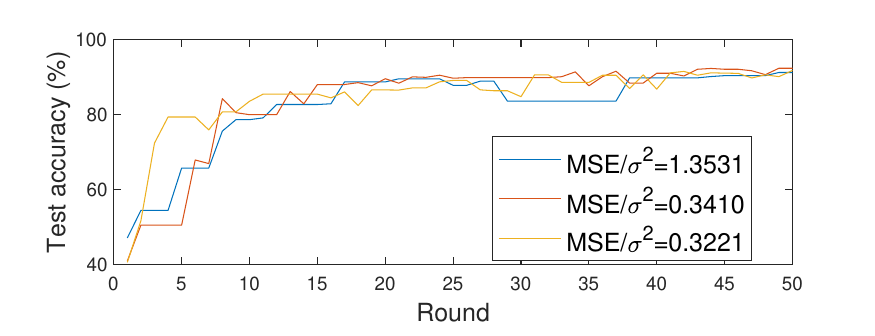}}
	\subfigure[] {\includegraphics[width=0.48\columnwidth]{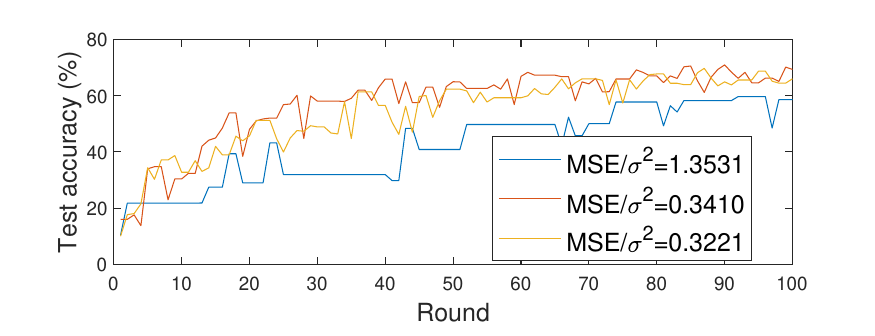}}
	\captionsetup{font={small}}\caption{Impact of MSE performance on the
		test accuracy averaged. (a) Non-i.i.d. MNIST. (b) Non-i.i.d. CIFAR10.}
\label{fig:mnist_noniid_mse}
\end{figure}

We also conduct the simulations on FL performance in terms of the
number of selected devices under different methods. There are $K=100$
devices, $N=4$ antennas at the aggregator. The MSE performance threshold
is set to $\overline{\mathrm{MSE}}/\sigma^{2}=-2$ dB. In this case,
the number of selected devices under `DC', `$\ell_{1}$+SDR' and the
refined methods with $N_{m}=10^3$ are $17$, $3$ and $30$ respectively.
Fig. \ref{fig:mnist_noniid_num} displays the testing accuracy on
non-i.i.d. MNIST and CIFAR10 datasets. As can be seen, the classification
accuracy converges faster if more devices are selected for local model
update and global model aggregation. This can be explained such that the average gradient computed by a small number of devices are may far from the true one, which increases the performance fluctuation and slows the convergence rates.

\begin{figure}
\centering
\subfigure[] {\includegraphics[width=0.48\columnwidth]{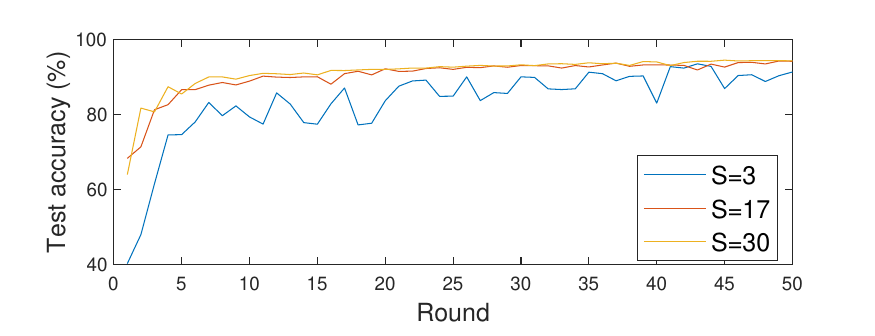}}
\subfigure[] {\includegraphics[width=0.48\columnwidth]{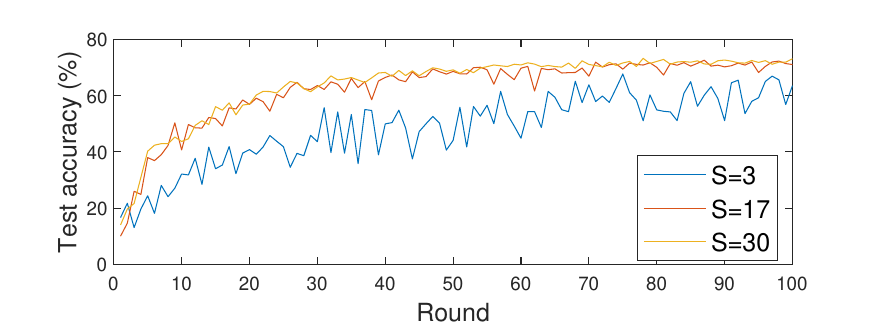}}
\captionsetup{font={small}}\caption{Impact of $\left|\mathcal{S}\right|$
on the test accuracy. (a) Non-i.i.d. MNIST. (b) Non-i.i.d. CIFAR10.}
\label{fig:mnist_noniid_num}
\end{figure}

\section{Conclusion}

In this paper, we investigate edge intelligence in a large-scale
wireless network, where the FL framework is adopted to train a shared
model. To aggregate the local models, the AirComp technique is adopted,
which aggregates the local models in an analog manner. The design of device selection
and model transmission schemes is critical for both learning and communication performances. We studied joint device selection and aggregate beamforming design
with the two objectives of the aggregate error minimization and the number
of selected devices maximization. To ease the computational complexity
in a large-scale system, we proposed a random aggregate beamforming-based
scheme, the core idea is to sample a vector from a complex
unit sphere first and select the devices afterwards. When the number
of the devices goes to infinity, we theoretically proved that the performance
obtained by the proposed methods approached the optimal MSE performance
for the aggregate error minimization problem. For the number of the
selected devices maximization, the analysis also gave the interval
and the average value of the number of the selected devices. When
the number of devices is much smaller than infinity, the refined method
was proposed aiming at performance improvement, the effectiveness
of which was analyzed. Simulation results demonstrated the effectiveness
of the proposed methods, and confirmed the theoretical analysis.

\bibliographystyle{IEEEtran}
\bibliography{Random_TCOM}

\end{document}